\newtheorem{theorem}{Theorem}[section]
\newtheorem{lemma}{Lemma}[section]
\newtheorem{corollary}{Corollary}[section]
\newtheorem{proposition}{Proposition}[section]
\newcommand{\qed}{\hfill $\Box$ \bigbreak}
\newenvironment{proof}{\noindent {\bf Proof.}}{\qed}
\newcommand{\cA}{{\cal A}}
\newcommand{\remove}[1]{}
\begin{document}

\baselineskip  0.2in 
\parskip     0.1in 
\parindent   0.0in 

\title{{\bf Deterministic Gathering with Crash Faults}}

\author{
Andrzej Pelc\thanks{D\'{e}partement d'informatique, Universit\'{e} du Qu\'{e}bec en Outaouais,
Gatineau, Qu\'{e}bec J8X 3X7,
Canada. E-mail: pelc@uqo.ca.
Supported in part by NSERC discovery grant 8136 -- 2013 
and by the Research Chair in Distributed Computing of
the Universit\'{e} du Qu\'{e}bec en Outaouais.}
}

\date{ }
\maketitle

\begin{abstract}

A team consisting of an unknown number of mobile agents, starting from different nodes of an unknown network, have to meet at the same node and terminate.
This problem is known as {\em gathering}. We study deterministic gathering algorithms under the assumption that agents are subject to {\em crash faults} which can occur at any time.
Two fault scenarios are considered. A {\em motion fault} immobilizes the agent at a node or inside an edge but leaves intact its memory at the time
when the fault occurred. A more severe {\em total fault} immobilizes the agent as well, but also erases its entire memory. Of course, we cannot require faulty agents to gather.
Thus the gathering problem for fault prone agents calls for all fault-free agents to gather at a single node, and terminate.  

When agents move completely asynchronously, gathering with crash faults of any type is impossible. Hence we consider a restricted version of asynchrony, where each agent is assigned by the adversary a fixed speed, possibly different for each agent. Agents have clocks ticking at the same rate. Each agent can wait for a time
of its choice at any node, or decide to traverse an edge but then it moves at constant speed assigned to it. Moreover, agents have different labels. Each agent knows its label and speed but not those of other agents.

We construct a gathering algorithm working for any team of at least two agents in the scenario of motion faults, and a gathering algorithm working in the presence of total faults, provided that at least two agents are fault free all the time. If only one agent is fault free, the task of gathering with total faults is sometimes impossible. Both our algorithms work in time
polynomial in the size of the graph, in the logarithm of the largest label, in the inverse of the smallest speed, and in the ratio between the largest and the smallest speed.

\vspace{2ex}

\noindent {\bf Keywords:} gathering, deterministic algorithm, mobile agent, crash fault. 
\end{abstract}

\vfill

\vfill

\thispagestyle{empty}
\setcounter{page}{0}
\pagebreak

\section{Introduction}

\subsection{The background}

A team of at least two mobile entities, called agents, starting from different nodes of a network, have to meet at the same node.
This task is known  as {\em gathering}, and it is called {\em rendezvous} if there are two agents. Both gathering and rendezvous have been extensively studied in the literature under different models.
Mobile entities may represent software agents in computer networks, or mobile robots, if the network is a labyrinth or a cave.
The reason to meet may be to exchange information or ground samples previously collected by the agents,
or to split work in a future task of network exploration or maintenance.
The difficulty of the gathering problem, even in the fault-free context,  comes from the fact that agents
usually do not have any canonical place to meet that is chosen a priori. If agents execute an identical deterministic algorithm, then some symmetry breaking
mechanism must be used to enable meeting, as otherwise meeting would be impossible in highly symmetric networks 
such as, e.g., rings, because agents would make the same moves, always keeping the same distance between them, if they move at the same speed. Since we are interested in deterministic gathering algorithms, we assume that symmetry is broken by assigning a different label to each agent.

 In this paper we study the gathering problem under the assumption that agents are subject to {\em crash faults} which can occur at any time.
Two types of such faults are considered. A {\em motion fault} immobilizes the agent but leaves intact its memory at the time
when the fault occurred. A {\em total fault} immobilizes the agent and erases its entire memory. Of course, we cannot require faulty agents to gather.
Thus the gathering problem for fault-prone agents calls for all fault-free agents to gather at a single node, and terminate.  

When agents move completely asynchronously, gathering with crash faults of any type is impossible. 
This can be seen already in a 3-node clique, each of whose nodes is initially occupied by an agent. Suppose that the adversary lets move agent $a$ at speed 1 but keeps
agents $b$ and $c$ at their initial nodes until agent $a$ terminates. Since each of the agents $b$ and $c$ may have crashed after the last visit of it by agent $a$, at some time agent $a$ must stop at some node and terminate, i.e.,  declare that all fault-free agents have met. This node is different
from the starting node of $b$ or of $c$. At this time the adversary declares that agents $b$ and $c$ are fault-free and lets them go, which makes gathering incorrect, as one of the agents
$b$ or $c$ is not with $a$ at the termination time.

Hence we consider a restricted version of asynchrony,
similar to the one in \cite{DiPe}, where each agent is assigned by the adversary a fixed speed, possibly different for each agent. Agents have clocks ticking at the same rate. Each agent can wait for a time
of its choice at any node, or decide to traverse an edge, but then it moves at constant speed assigned to it.  Each agent knows its label and speed but not those of other agents.

\subsection{The model and the problem}

The network is modeled as a simple undirected connected graph.
As it is often done in the literature on rendezvous and gathering,
we seek gathering algorithms that do not
rely on the knowledge of node labels, and can work in anonymous graphs as well. 
The importance of designing such algorithms
is motivated by the fact that agents may be unable to perceive labels of nodes, even when such labels exist,
because of limitations of perception faculties of the agents, 
or nodes may refuse to reveal their labels, e.g., due to security or privacy reasons.
It should be also noted that, if nodes had distinct labels, agents might explore the graph and meet in the smallest node, hence gathering
would reduce to graph exploration.
On the other hand, we make the assumption, standard in the literature of the domain, that
edges incident to a node $v$ have distinct labels in 
$\{0,\dots,d-1\}$, where $d$ is the degree of $v$. Thus every undirected
edge $\{u,v\}$ has two labels, which are called its {\em port numbers} at $u$
and at $v$. Port numbering is {\em local}, i.e., there is no relation between
port numbers at $u$ and at $v$. Note that in the absence of port numbers, edges incident to a node
would be undistinguishable for agents and thus gathering would be often impossible, 
as the adversary could prevent an agent from taking some edge incident to the current node.
Also, the above mentioned security and privacy reasons that may prevent nodes from revealing their labels, do not apply to port numbers.

Agents start from different nodes of the graph.
They cannot mark visited nodes or traversed edges in any way.
The adversary wakes up some of the agents at possibly different times. A dormant agent is woken up by the first agent that visits
its starting node. 
Agents do not  know the topology of the graph nor any bound on its size. They do not know the total number of agents. They have clocks ticking at the same rate. The clock of each agent starts at its wakeup,
and at this time the agent starts executing the algorithm.

The adversary assigns two numbers to each agent: a positive integer called its {\em label} and a positive real called its {\em speed}.
All labels are different, and if the agent has speed $\alpha$, it traverses each edge in time $1/\alpha$. (We may represent edges as arcs of length 1 between points representing nodes.) Each agent knows its label and its speed, but not those of other agents.
During the execution of an algorithm, an agent can wait at the currently visited node for a time of its choice, or it may choose a port to traverse the corresponding edge.
In the latter case, the agent traverses this edge at the constant speed $\alpha$ assigned to it, thus getting to the other end of the edge after time $1/\alpha$.

When an agent enters a node, it learns its degree and the port of entry. When two or more agents are at the same point in the graph (either at a node or inside an edge)
at the same time, they notice this fact and can exchange all information that is currently in their memory. 
We assume that the memory of the agents is unlimited: from the computational point of view they are modeled as 
Turing machines. 

Agents are subject to {\em crash faults} which can occur at any time.
We consider two fault scenarios. A {\em motion fault} immobilizes the agent at a node or inside an edge but leaves intact its memory at the time
when the fault occurred. A more severe {\em total fault} immobilizes the agent as well, but also erases its entire memory. An agent $A$ meeting an agent $B$
which incurred any crash fault knows that $B$ crashed. In the case of a motion fault of the crashed agent,  $A$ can read the entire memory of $B$ frozen at the time of the crash. By contrast, 
an agent meeting an agent which incurred a total fault cannot learn anything about it, not even its label. An agent that never crashes during the execution of the gathering algorithm is called {\em good}. 

The task of gathering is defined as follows. All good agents have to be at the same node and terminate, i.e., declare that all good agents are at this node. 
(Since good agents located at the same node can mutually read their memory, once one of them declares termination, all others can do it as well, so we may guarantee that the declaration is simultaneous.)
The aim of this paper is the construction of deterministic gathering algorithms under both considered fault scenarios. 

We assume that there is at least one good agent, otherwise gathering of good agents is accomplished vacuously. The time of gathering is counted since the wakeup of the first good agent. Note that counting time since the wakeup of the earliest agent does not make sense, because the adversary can
wake up an agent, crash it immediately, and then wait an arbitrarily long time before waking up the second agent.

\subsection{Our results}

We construct a gathering algorithm working for any team of at least two agents in the scenario of motion faults, and a gathering algorithm working in the presence of total faults, provided that at least two agents are good. If only one agent is good, the task of gathering with total faults is sometimes impossible. 

The main challenge of our scenario is that agents may have different speeds, and at the same time they are subject to crash faults. To the best of our knowledge, these two difficulties together were never considered before in the context of gathering in networks: in \cite{DPV}, asynchronous rendezvous was considered but only for functional agents, while  in \cite{BDD,DPP}, agents with Byzantine faults were considered but only in the fully synchronous scenario, and  \cite{CDLP}  considered
rendezvous of two synchronous agents subject to transient delay faults.

In order to overcome this double difficulty, we adopt a combination of event-driven and time-driven approach.  At early stages of the gathering process,
agents do not know any upper bound on the size of the graph, and hence their decisions cannot rely on time-outs and must be event driven, i.e., they depend on meeting another agent
with a specific memory content.  By contrast, due to our algorithm design, towards the end of gathering, when an agent is already at its final destination, it knows an upper bound on the size of the graph, and it knows everything about other good agents, hence it may compute a time bound after which all good agents must get together.

While our main focus is feasibility, and we do not attempt time optimization, both our algorithms work in time
polynomial in the size of the graph, in the logarithm of the largest label, in the inverse of the smallest speed, and in the ratio between the largest and the smallest speed.


\subsection{Related work}
\label{subsec:relatwork}

The rendezvous problem for two agents was first introduced in \cite{schelling60}. 
An excellent survey of  randomized rendezvous in various scenarios  can be found in
\cite{alpern02b}, cf. also  \cite{alpern95a,alpern02a,anderson90,baston98,israeli}. 
Deterministic rendezvous in networks was surveyed in \cite{Pe}.
In many papers rendezvous was considered in a geometric setting:
an interval of the real line, see, e.g.,  \cite{baston98,baston01,gal99},
or the plane, see, e.g., \cite{anderson98a,anderson98b}).

For the problem of deterministic rendezvous in networks, modeled as undirected graphs, attention concentrated on the study of the feasibility of rendezvous, and on the time required to achieve this task, when feasible. For example, deterministic rendezvous with agents equipped with tokens used to mark nodes was considered, e.g., in~\cite{KKSS}. Deterministic rendezvous of two agents that cannot mark nodes but have unique labels was discussed in \cite{DFKP,KM,TSZ07}.
All these papers were concerned with the time of rendezvous in arbitrary
graphs, under the synchronous scenario. In \cite{DFKP} the authors showed a rendezvous algorithm polynomial in the size of the graph, in the length of the shorter
label and in the delay between the starting time of the agents. In \cite{KM,TSZ07} rendezvous time was polynomial in the first two of these parameters and independent of the delay.

Memory required by the agents to achieve deterministic rendezvous was studied in \cite{FP2} for trees and in  \cite{CKP} for general graphs.
Memory needed for randomized rendezvous in the ring was discussed, e.g., in~\cite{KKPM08}. 

Apart from the synchronous model used in the above papers, several authors investigated asynchronous rendezvous in the plane \cite{CFPS,fpsw} and in network environments
\cite{CLP,DGKKP,DiPe}.
In the latter scenario the agent chooses the edge which it decides to traverse but the adversary controls the speed of the agent. Under this assumption rendezvous
in a node cannot be guaranteed even in very simple graphs, and hence the rendezvous requirement is relaxed to permit the agents to meet inside an edge. In particular, the main result of \cite{DPV} is an asynchronous rendezvous algorithm working in an arbitrary graph at cost (measured by the number of edge traversals) polynomial in the size of the graph and in the logarithm of the smaller label. The scenario of possibly different fixed speeds of the agents, that we use in the present paper, was introduced in \cite{DiPe}, but agents were assumed fault free.

Gathering more than two agents was studied  in the plane \cite{AP, CFPS, DiPe, fpsw}, and in networks \cite{DiPe2,DPP,YY}. In \cite{DiPe2} agents were assumed anonymous (and hence gathering was not always possible), and in  \cite{DPP,YY} they were equipped with distinct labels.

Gathering many agents subject to faults was considered in \cite{AP} in the scenario of anonymous agents in the plane, and in \cite{BDD,CDLP,DPP} in graphs.
In \cite{AP} agents were anonymous but the authors assumed their powerful sensorial capabilities: an agent could see all the other agents in the plane.
In  \cite{BDD,CDLP,DPP} the synchronous model was used. In \cite{CDLP} the faults were assumed relatively benign and transient: the adversary could delay the move of an agent for a few rounds. By contrast, the faults in \cite{BDD,DPP} were of the most severe type,  they were Byzantine, and hence, in order to achieve gathering,
some restrictive upper bound on the number of Byzantine agents had to be imposed. To the best of our knowledge, the problem of gathering in networks,  under the challenging scenario
when agents are not synchronous, and are at the same time subject to permanent faults, was never considered before.

\section{Preliminaries}\label{prelim}

Throughout the paper, 
the number of nodes of a graph is called its size and is denoted by $n$.
In this section we {present three procedures} from the literature, that will be used (after suitable modifications) as building blocks in our algorithms. 
The aim of the first two procedures is graph exploration, i.e., visiting all nodes and traversing all edges of the graph by a single agent. 
The first procedure, based on universal exploration sequences (UXS), is a corollary of the  result of Reingold \cite{Re}. Given any positive integer $n$, it allows the agent to traverse all edges of any graph of size at most $n$,
starting from any node of this graph, using at most $P(n)$ edge traversals, where $P$ is some non-decreasing polynomial. (The original procedure of Reingold only visits all nodes, but it can be transformed to traverse all edges by visiting all neighbors of each visited node before going to the next node.) After entering a node of degree $d$ by some port $p$,
the agent can compute the port $q$ by which it has to exit; more precisely $q=(p+x_i)\mod d$, where $x_i$ is the corresponding term of the UXS.

A {\em trajectory} is a sequence of nodes of a graph, in which each node is adjacent to the preceding one. 
Given any starting node $v$,  we denote by $R(n,v)$ the trajectory obtained by Reingold's procedure. The procedure can be applied in any graph starting at any node, giving
some trajectory. We say that  the agent {\em follows} a trajectory if it executes the above procedure used to construct it.
This trajectory will be called {\em integral}, if the corresponding route covers all edges of the graph. By definition, the trajectory $R(n,v)$ is integral, if it is
obtained by Reingold's procedure applied in any graph of size at most $n$, starting at any node~$v$. 

{The second procedure, described in \cite{DPV}, allows an agent to traverse all edges {and visit all nodes} of any graph of size at most $k$ provided that there is a {unique token located on an {\em extended edge} $u-v$ (for some adjacent nodes $u$ and $v$) and authorized to move arbitrarily on it.
An extended edge is defined as the edge $u-v$ augmented by nodes $u$ and $v$}. (It is well known that a terminating exploration, even of all anonymous rings of unknown size by a single agent without a token, is impossible.) 
In our applications the roles of the token and of the exploring agent will be played by agents. This procedure was called $ESST$ in \cite{DPV}, for {\em exploration with a semi-stationary token}, {as the token always remains on the same extended edge (even if it can move arbitrarily inside it)}. For completeness,  we quote verbatim the description of the procedure from \cite{DPV}.}
The following notion is used in this description. Let $m$ be a positive integer.  An application of $R(2m,u)$ to a graph $G$ at some node $u$ is called {\em clean}, if all nodes in this application are of degree at most $m-1$.

\vspace*{0.2cm}
\noindent
{Procedure {\em  ESST}}

{The procedure proceeds in {phases $i=3,6,9,12\dots$.} In any phase $i$, the agent first applies 
$R(2i,v)$
at the node $v$ in which it started this phase.
Let $(u_1,\dots , u_{r+1})$ be the trajectory $R(2i,v)$ ($v=u_1$ and $r=P(2i)$). Call this trajectory the trunc of this phase.
If it is not clean, or if no token is seen, the agent aborts phase $i$ and starts phase {$i+3$}. Otherwise, the agent backtracks to $u_1$, and applies 
$R(i,u_j)$
at each node $u_j$ of the trunc, interrupting a given execution of $R(i,u_j)$ when it sees a token, every time recording the {\em code} of the path from $u_j$ to this token. 
This code is defined as the sequence of ports
encountered while walking along the path.
(If, for some $j$, the token is at $u_j$, then this code is an empty sequence.) After seeing a token, the agent backtracks to $u_j$, goes
to $u_{j+1}$ and starts executing $R(i,u_{j+1})$. For each node $u_j$ of the trunc, if at the end of $R(i,u_j)$ either no token is seen during the execution of $R(i,u_j)$, or the agent has recorded at least {$\frac{i}{3}$} different codes in phase $i$, then
the agent aborts phase $i$ and starts phase {$i+3$} (in the special case where the agent decides to abort phase $i$ while traversing an edge, phase {$i+3$} starts at the end of this edge traversal). Otherwise, upon completion of phase $i$, it stops as soon as it is at a node.}

The following theorem was proved in \cite{DPV}:

\begin{theorem}\label{esst}
There exists a non-decreasing polynomial $\rho$ such that
procedure $ESST$ terminates in every graph $G$ of size $n$ after $\rho(n)$ steps. Upon its termination, all {edges of $G$ are traversed 
by the agent}.
\end{theorem}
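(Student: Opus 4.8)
The plan is to establish the two assertions separately: first that procedure $ESST$ halts within a number of steps polynomial in $n$, and second that at the halting moment every edge of $G$ has been traversed. Recall that the procedure only ever halts upon \emph{completing} a phase (aborting a phase merely launches the next one), so both assertions hinge on understanding which phases abort. I would organize everything around two dual facts about a phase $i$: (A) if $i$ is large enough, namely $i\ge c\,n$ for a suitable constant $c$, then phase $i$ cannot abort; and (B) if phase $i$ completes, then $n\le 2i$, so that its trunc $R(2i,v)$ is integral and hence all edges of $G$ are traversed during that phase. Fact (A) guarantees that a completing phase is reached no later than some phase $i^*=O(n)$, which yields termination, while fact (B) guarantees correctness of the output at whichever phase actually completes.

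For the time bound I would first bound the cost of a single phase $i$. The trunc costs at most $P(2i)$ edge traversals; if the phase is not aborted during the trunc, the agent performs, for each of the at most $P(2i)+1$ nodes $u_j$ of the trunc, one execution of $R(i,u_j)$ (at most $P(i)$ traversals) together with a backtrack to $u_j$ and a step to $u_{j+1}$ (at most $P(i)+P(2i)$ traversals). Hence phase $i$ costs at most a fixed polynomial $Q(i)$. Since the phases are $3,6,9,\dots$ and, by fact (A), the procedure stops no later than phase $i^*=O(n)$, the total number of steps is at most $\sum_{i\le i^*}Q(i)$, which is bounded by a non-decreasing polynomial $\rho(n)$ (pass to the running maximum if necessary).

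Both facts (A) and (B) reduce to controlling the four abort conditions, and three of them are routine. Cleanness holds once $i\ge n$, because every node has degree at most $n-1\le i-1$. A token is seen during the trunc whenever $2i\ge n$, since then $R(2i,v)$ is integral and therefore traverses the edge carrying the token; as the token always lies on this one extended edge $x-y$, a full traversal of that edge forces a coincidence of agent and token in space and time (an intermediate-value argument on the position along the edge, once the token's motion is accounted for, as below), so the token is actually observed. The same reasoning handles each $R(i,u_j)$ once $i\ge n$, as each $R(i,u_j)$ is then integral. Thus, for fact (A), the only condition that can still cause an abort when $i\ge n$ is the code count, and for fact (B) the code count is precisely the mechanism that must fire when $n>2i$ (together with possibly-missed tokens, on which we cannot rely).

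The main obstacle, therefore, is the code-counting estimate. For (A) I must show that the number of distinct codes recorded in a phase is $O(n)$ when the graph is explorable at scale $i$, so that it falls below $i/3$ once $i=\Omega(n)$. The clean direction is that each code is a trajectory prefix, read as a port sequence, from one of the at most $n$ \emph{distinct} graph nodes appearing in the trunc to the token on the fixed extended edge $x-y$; since the token is confined to $x-y$, every such prefix terminates at $x$, at $y$, or inside $x-y$ entered from one of these two nodes, so each distinct starting node contributes only boundedly many distinct codes and the total is $O(n)$. The genuinely delicate point, which I expect to be the crux, is that the token \emph{moves} within $x-y$, so sightings depend on timing and a single starting node could a priori yield different codes on different visits; handling this requires arguing that the first sighting along the deterministic trajectory $R(i,u_j)$ is forced by the traversal-coincidence argument and is pinned down up to the bounded ambiguity of where on $x-y$ the meeting occurs. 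For (B) I would establish the converse inequality: if $n>2i$, then either some $R(i,u_j)$ misses the token (an immediate abort) or the token is reached from more than $i/3$ distinct graph nodes, recording more than $i/3$ distinct codes (again an abort). Its contrapositive says that a completing phase must satisfy $n\le 2i$, whence its trunc is integral and all edges of $G$ are traversed, completing the argument.
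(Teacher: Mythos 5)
The first thing to note is that this paper does not prove the statement at all: the sentence immediately preceding it reads ``The following theorem was proved in \cite{DPV}'', and the procedure itself is quoted verbatim from that reference. So your attempt cannot be compared with a proof in this paper and must stand on its own. Judged that way, it has a genuine gap, and it sits exactly where you yourself locate ``the crux'': the code-counting estimate. Your claim that each distinct starting node contributes only boundedly many distinct codes --- because every code must terminate at $x$, at $y$, or inside the edge $x-y$ --- is a non sequitur. A code is a prefix of the port sequence of the deterministic walk $R(i,u_j)$, truncated at the \emph{first} sighting, and the first sighting is not determined by the graph: it is chosen, in effect, by the adversary moving the token. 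The walk $R(i,w)$ may visit $x$ or $y$ many times before it first fully traverses the edge $x-y$; at each such visit the adversary can freely decide whether the token is present or absent, since only a full traversal of $x-y$ \emph{forces} a coincidence by the intermediate-value argument. Because distinct executions $R(i,u_j)$ with $u_j=w$ occur at disjoint time intervals, the adversary can schedule a different first sighting for each of them, producing as many distinct codes from the single node $w$ as its trajectory has visits to $\{x,y\}$ before its first traversal of $x-y$ --- a quantity with no evident constant bound (it can grow polynomially in $i$). Your proposed fix (``pinned down up to the bounded ambiguity of where on $x-y$ the meeting occurs'') addresses only the harmless ambiguity of the meeting point inside the edge, which indeed leaves the code unchanged, not the ambiguity of \emph{which} visit or traversal event is the first sighting, which changes the code.

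The same missing estimate undermines your fact (B), for which no argument is offered: you assert that $n>2i$ forces either a missed token or more than $i/3$ distinct codes, but give no mechanism by which a large graph forces many distinct codes. Note also that ``the token is reached from more than $i/3$ distinct graph nodes'' does not imply ``more than $i/3$ distinct codes'': codes are bare port sequences, and walks from different nodes can produce identical sequences (in an oriented ring, $R(i,w)$ and $R(i,w')$ have the same port sequence from every starting node, and equal-length prefixes can both end on the token's extended edge, one at $x$ and one at $y$). So both halves of your plan --- termination via (A) and correctness via (B) --- reduce to precisely the combinatorial statement that is the real content of the theorem, and that statement is left unproved. The surrounding superstructure you build (per-phase cost bound $Q(i)$, summation up to the first completing phase, integrality of the trunc and the forced-meeting argument once $i\geq n$) is sound, but it is the easy part; the heart of the matter is the analysis carried out in \cite{DPV}, which is considerably more involved than this sketch.
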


We will use the following corollary of the above theorem.

\begin{corollary}\label{cor}
Consider the procedure $ESST$ performed in some $n$-node graph $G$. There exists a polynomial $\pi(n)$ and integers $\pi_i$, for all $i\leq \pi(n)$, such that the number of phases of $ESST$ is
at most $\pi(n)$, the number of steps in phase $i$ is at most $\pi_i$, and all numbers $\pi_i$ are polynomial in $n$.
\end{corollary}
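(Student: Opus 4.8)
The plan is to derive both bounds directly from Theorem~\ref{esst} together with a step‑by‑step reading of the description of procedure $ESST$. Theorem~\ref{esst} guarantees that the whole execution on $G$ consists of at most $\rho(n)$ steps, for a fixed non‑decreasing polynomial $\rho$. Since in every phase $i$ the agent begins by traversing the trunc $R(2i,v)$, which is a non‑empty walk (as $G$ is connected with at least two nodes, so $P(2i)\geq 1$), each phase accounts for at least one step. Hence the number of phases cannot exceed $\rho(n)$, and because the phase indices form the increasing sequence $3,6,9,\dots$, the largest index actually reached is at most $3\rho(n)$. I would therefore set $\pi(n)=3\rho(n)$, which simultaneously bounds the number of phases and the largest phase index by a polynomial in $n$.

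For the per‑phase bound I would simply count the edge traversals prescribed by the description of a single phase $i$. Traversing the trunc $R(2i,v)$ costs $P(2i)$ steps, and backtracking to $u_1$ costs another $P(2i)$ steps. The trunc has $r+1=P(2i)+1$ nodes, and at each node $u_j$ the agent runs $R(i,u_j)$ (at most $P(i)$ steps), backtracks to $u_j$ (at most $P(i)$ steps), and advances to $u_{j+1}$ (one step). Summing these contributions yields the bound
\[
\pi_i \;=\; 2P(2i)+(P(2i)+1)\,(2P(i)+1)
\]
on the number of steps in phase $i$. As $P$ is a fixed polynomial, $\pi_i$ is a fixed polynomial in the phase index $i$. (Crudely, one could even take $\pi_i=\rho(n)$ uniformly, since each phase sits inside a run of total length at most $\rho(n)$; but the estimate above is sharper and makes the dependence on the phase index explicit, which is what the later applications use.)

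It remains to observe that each $\pi_i$ is polynomial in $n$, and here the only content is a composition: every phase index $i$ actually executed satisfies $i\le 3\rho(n)=\pi(n)$ by the first paragraph, while $\pi_i$ is polynomial in $i$; hence $\pi_i\le \pi_{\pi(n)}$ is a fixed polynomial evaluated at $3\rho(n)$, i.e.\ a polynomial in $n$. I expect the only real care needed to lie in the bookkeeping of the per‑phase count—charging every backtracking and every transition between consecutive trunc nodes, and checking that the early‑abort cases (non‑clean trunc, or no token seen) only make a phase cheaper, so that the same $\pi_i$ bounds aborted and completed phases alike. Conceptually the argument is light: Theorem~\ref{esst} converts the total‑step budget into a polynomial bound on the number of phases, this pins the phase index to a polynomial range, and that is exactly what upgrades the ``polynomial in $i$'' per‑phase estimate to ``polynomial in $n$.''
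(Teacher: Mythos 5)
Your proposal is correct, and it is essentially the intended argument: the paper states this corollary without an explicit proof, since it follows exactly as you argue --- Theorem~\ref{esst} bounds the total number of steps by $\rho(n)$, each phase costs at least one step (so the number of phases, and hence the largest phase index $i\in\{3,6,9,\dots\}$, is polynomially bounded), and a direct count of the trunc traversal, backtracks, and the executions of $R(i,u_j)$ gives a per-phase bound polynomial in the phase index $i$, hence in $n$. Your added care about the distinction between the number of phases and the largest phase index, and about aborted phases only being cheaper, is exactly the right bookkeeping.
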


The third procedure, that will be called {\em Meeting}, is the main rendezvous algorithm from  \cite{DPV}. It works for an arbitrary graph of unknown size, 
under the general asynchronous model, where, in consecutive steps,  the agent chooses the port number that it decides to use, and the adversary produces any continuous movement of the agent on the respective edge, subject only to the constraint that it ends at the other endpoint of the chosen edge.

Procedure {\em Meeting} guarantees the meeting of two agents with distinct labels at cost polynomial in the size of the graph and in the logarithm of the smaller label. 
The following theorem was proved in \cite{DPV}:

\begin{theorem}\label{meeting}
There exists a polynomial $\Pi(x,y)$, non decreasing in each variable, such that
if two agents with different labels $L_1$ and $L_2$ execute Procedure Meeting in a graph of size $n$, then their meeting is guaranteed by the time one of them performs $\Pi(n,\min(\log L_1, \log L_2))$ edge traversals. 
\end{theorem}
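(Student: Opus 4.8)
The plan is to analyze Procedure Meeting by exhibiting an asymmetry, derived from the two distinct labels, that eventually forces one agent to explore the whole graph while the other stays confined to a single extended edge; the meeting then follows from Theorem \ref{esst}. First I would recall the label transformation at the heart of the procedure: each agent turns its label $L$ into a binary string $T(L)$ of length $\Theta(\log L)$ by a transformation (for instance, doubling each bit and inserting delimiters) chosen so that, for any two distinct labels, the transformed strings are guaranteed to contain a position at which one string carries a $1$ and the other a $0$. This separating property is the only place where the hypothesis $L_1 \neq L_2$ is used, and it is what breaks the symmetry between the two agents.

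Next I would describe the phase structure of Meeting driven by the bits of $T(L)$. In each phase the agent reads the next bit of its transformed label: on a $1$ it becomes \emph{active}, running the exploration routine ESST (treating any agent it encounters as the semi-stationary token), and on a $0$ it becomes \emph{passive}, remaining on its starting extended edge and thereby acting as a legitimate semi-stationary token for a would-be explorer. By the separating property of $T$, there is a phase index at which one agent is active while the other is passive. In that phase the active agent runs ESST in a graph of size $n$ containing a token that never leaves one extended edge; by Theorem \ref{esst} it traverses every edge of the graph, in particular the extended edge on which the passive agent sits, so the two agents meet.

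For the cost bound I would count edge traversals. Since $T(L)$ has length $O(\log L)$, the number of phases executed before the separating phase is $O(\min(\log L_1,\log L_2))$; and by Theorem \ref{esst} (or Corollary \ref{cor}) a single active phase costs at most $\rho(n)$ edge traversals, while a passive phase costs none. Multiplying, the active agent performs at most $O(\min(\log L_1,\log L_2)) \cdot \rho(n)$ traversals before the meeting, which is bounded by $\Pi(n,\min(\log L_1,\log L_2))$ for a suitable polynomial $\Pi$ non-decreasing in each argument.

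The main obstacle, and the reason the real argument is more delicate than this outline, is the asynchrony: the two agents are not phase-aligned, so I must guarantee that the passive agent actually remains on its extended edge for the \emph{entire} duration of the active agent's ESST execution, and that the two are not forever trapped in synchronized active phases in which they chase each other without ever coinciding. Handling this requires that the phase lengths in Meeting be tied to numbers of edge traversals, so that an agent's notion of a phase is intrinsic and independent of adversarial timing, and that the transformation $T$ produce blocks long enough that a separating bit of one agent overlaps in real time a passive stretch of the other. Making these overlaps provable against an adversary that fully controls the continuous motion of each agent on each edge is the crux of the analysis.
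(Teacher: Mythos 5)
Your proposal cannot be matched against a proof in the paper, because the paper contains none: Theorem~\ref{meeting} is imported as a black box from \cite{DPV}, where Procedure {\em Meeting} is defined and analyzed, and the paper's ``proof'' is simply that citation. So your attempt has to stand on its own as a proof of the statement, and it does not.

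The gap is the one you name in your final paragraph and then leave unresolved, and it is not a technical afterthought but the entire content of the theorem. Your key step --- ``there is a phase index at which one agent is active and the other passive, and in that phase Theorem~\ref{esst} forces a meeting'' --- needs the passive agent's passive stretch to cover, {\em in real time}, the whole of the active agent's ESST execution. But in the model in which Procedure {\em Meeting} operates, the adversary produces arbitrary continuous motion on every edge an agent chooses to traverse, so the two agents' phase schedules can be skewed arbitrarily: the adversary can drive one agent through all of its phases while the other has not yet completed a single edge traversal, so that differing bits at the same index never overlap in time; symmetrically, it can keep both agents simultaneously active, in which case Theorem~\ref{esst} gives no guarantee, since neither agent behaves as a semi-stationary token confined to one extended edge. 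Tying phases to edge-traversal counts, as you suggest, does not by itself repair this, because a bound on one agent's traversals says nothing about how many traversals (zero, or all of them) the other agent has performed in the same real-time interval; the resolution in \cite{DPV} requires a substantially more elaborate protocol (growing repetition counts, pigeonhole arguments phrased in terms of traversal counts rather than time, and handling of encounters with an agent that is moving and hence is not a valid token), not a patch on your outline. A further, smaller defect: the agents do not know $n$, so ``a single active phase costs at most $\rho(n)$'' can only be an a posteriori bound, never a parameter available to the construction; your phase design implicitly uses it as the latter. As written, the proposal is a plausible architecture sketch sharing ingredients with \cite{DPV} (label transformation plus ESST), but it is not a proof.
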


The above theorem guarantees that two agents with different labels $L_1$ and $L_2$ executing Procedure Meeting under our present scenario will meet after a total of $\Pi(n,\min(\log L_1, \log L_2))$ edge traversals, even if one of them crashes during the execution of this procedure. This is due to the fact that a walk of an agent under our scenario (at any constant speed and with a possible crash) is an admissible behavior of the agent under the scenario from  \cite{DPV}, where the adversary can hold an agent idle for an arbitrary finite time.

We will use the following notation. For a positive integer $x$, by $|x|$ we denote the length of its binary representation, called the length of $x$. Hence $|x|=\lfloor \log x \rfloor +1$. All logarithms are with base 2.
For two agents, we say that the agent with larger (smaller) label is larger (resp. smaller). 
The length (i.e., the logarithm) of the largest label is denoted by $\Lambda$, the smallest speed of any agent is denoted by $\epsilon$, and the largest speed of any agent is denoted by $\kappa$.
For any trajectory $T=(v_0,\dots,v_r)$, we denote by $\overline{T}$  the reverse trajectory $(v_r,\dots,v_0)$.

\section{Gathering with motion faults}

In this section we construct our gathering algorithm working in the presence of motion faults.
The algorithm works for any team of at least two agents, at least one of which is good. If the team may consist of a single agent that is good, gathering with termination is
impossible, e.g., in rings: a single agent would have to stop and declare termination at some point, while there could be another good agent far away in the ring,
not being able to affect the history of the first agent; hence in the latter scenario, the first  agent would incorrectly declare gathering with termination as well.

  We first describe the high-level idea of the algorithm, skipping the details, and then proceed with the detailed description.. To facilitate
the exposition, we introduce six states in which an agent can be: {\tt cruiser}, {\tt explorer}, {\tt token}, {\tt finder}, {\tt recycled explorer}  and {\tt gatherer}, and describe the goals that an agent achieves in each of these states.

{\bf Algorithm} {\tt Gathering with motion faults}

An agent starts executing the algorithm in state {\tt cruiser}. The goal in this state is to meet another agent, using procedure {\em Meeting}. After meeting another agent in state {\tt cruiser},
agents break symmetry using their labels: the larger agent becomes an {\tt explorer}, the smaller becomes a {\tt token}. The {\tt token} stays put, if they met at a node,
and it continues the current edge  traversal and waits at its endpoint, if they met inside an edge. The {\tt explorer} tries to find an upper bound on the size of the graph.
This is done using a modification of procedure $ESST$, where the agent in state {\tt token} is used as the semi-stationary token. However,  a crucial change
to this procedure has to be made. It must be modified in a way to permit the {\tt token} to realize that the {\tt explorer} crashed, if this is the case. To this end,
after each phase of $ESST$, the {\tt explorer} returns to the {\tt  token}, which permits the latter to use a time-out mechanism (based on the {\tt explorer}'s speed that the {\tt token} learned) that would signal a crash of the {\tt explorer}.
If the {\tt token} realizes that the {\tt explorer} crashed, it transits to state {\tt finder} and tries to meet the crashed {\tt explorer}. It then transits to state {\tt recycled explorer}, performs $ESST$ with this crashed {\tt explorer} as token, in order to
learn the size of the graph.

After learning the size of the graph, an agent transits to state {\tt gatherer}. In this state, an agent
uses procedure {\em Meeting} to meet all other agents, both good and crashed.  It uses the upper bound on the size of the graph to estimate the time after which it can be certain to have met all other agents. This is not straightforward, as the agent does not yet know the speeds of all agents, and it has
to establish the sufficient time of performing procedure $Meeting$ on the basis of currently available information.
When meeting another agent, a {\tt gatherer} records its label, speed and a path towards the starting node of this agent.
After meeting all agents, a {\tt gatherer} goes to the starting node of the smallest agent. Then it uses a time-out mechanism, based on the knowledge of speeds of the other agents. It knows that until some time bound, all other good agents should also be at this node. Then the agent terminates, declaring that gathering of all good agents is accomplished.

The main difficulty of the algorithm design is a combination of event driven and time driven approach. Since at various stages of the algorithm execution the agent has still incomplete knowledge about the graph and about other agents, it cannot rely on time-outs at the early stages. For example, before transiting to state {\tt gatherer} the agent does not know any upper bound on the size of the graph, and hence all transitions must be event driven, e.g., meeting another agent in a given state. By contrast,
towards the end of the execution, when the agent is already at its final destination, it knows an upper bound on the size of the graph, and it knows everything about other good agents, hence it may compute a time bound after which all good agents must get together.

We now give a detailed description of the algorithm, explaining the actions of the agents in each of the six states. In the following description, the executing agent is called $A$, has label $L$ and speed $\theta$.
We assume that at every meeting, agents mutually learn their memory states. This is for convenience only: it will be seen that, in fact, much less information needs to be transferred.

State {\tt cruiser}

Upon wakeup, agent $A$ is in this state. It performs procedure {\em Meeting} until one of the following four events happens: 
\begin{enumerate}
\item
it meets another agent in state {\tt cruiser},
\item
it meets a crashed agent $B$ in state {\tt token},
\item
it meets an agent $C$ in state {\tt finder} or {\tt recycled explorer}
\item
it meets an agent $B$ in state {\tt gatherer}. 
\end{enumerate}
In the first case, let $L_1,\dots ,L_k$ be the labels of agents in state {\tt cruiser} that $A$ met in the first meeting (simultaneous meeting of many agents is possible).
Let $L'=\min(L,L_1,\dots ,L_k)$. If $L=L'$, agent $A$ transits to state {\tt token}, otherwise agent $A$ transits to state {\tt explorer}, and considers the agent with label $L'$ as its token.

In the second case, agent $A$ transits to state {\tt explorer}, and considers agent $B$ as its token. 

In the third case, agent $A$ learns from $C$ the label of the culprit $B$ of $C$ (see the description of state {\tt token}).
Agent $A$ adopts $B$ as its culprit and transits to state {\tt finder}.

In the fourth case, agent $A$ completes the currently executed edge traversal, adopts the upper bound $m$ on the size of the graph from agent $B$,  and transits to state {\tt gatherer}.

State {\tt explorer}

Agent $A$ executes the modified procedure $ESST$, starting at the node of meeting with its token, or, if the meeting occurred inside an edge, completing the current traversal and starting at the corresponding endpoint.  
The procedure is modified in the following way: after each phase (completed or aborted) agent $A$ backtracks to its token following the trajectory $\overline{T}$, where $T$ is the trajectory followed in this phase of $ESST$. Thus the number of phases in the modified Procedure $ESST$ is the same as in $ESST$, and phase $i$ of the modified procedure takes $2\pi_i$ steps (cf. Corollary \ref{cor}). Upon completion of the procedure, the agent traversed all edges of the graph, knows it, and adopts the total number of traversals $m$ as an upper bound on the size of the graph. Agent $A$
gives the integer $m$ to its token and transits to state {\tt gatherer}.

State {\tt token}

Agent $A$ considers all agents that it met when they were in state {\tt cruiser} at the moment of the first meeting (and that transited to state {\tt explorer} upon this meeting) as its {\tt explorers}. Agent $A$  continues its current edge traversal and waits until one of the following two events happens for each of its {\tt explorers}. Either an {\tt explorer} completes the modified procedure $ESST$ giving agent $A$ an upper bound $m$ on the size of the graph, or an {\tt explorer} $B$ is not back after the scheduled time at most  $2\pi_i/\beta$ in some phase $i$, where $\beta$ is the speed of $B$ (learned by $A$). Notice that there may be potentially two different upper bounds given to $A$ by {\tt explorers} that successfully completed
the modified procedure $ESST$ because these {\tt explorers} may start at different endpoints of the edge where they met their {\tt token}, if the meeting occurred inside an edge.
When one of the above events occurs for every {\tt explorer} of the agent $A$, there are two cases. The first case is when at least one {\tt explorer} managed to complete the modified procedure $ESST$ and gave an upper bound to $A$. In this case, agent $A$ adopts the minimum of the (possibly two) obtained upper bounds as its upper bound and transits to state {\tt gatherer}. The second case is when all {\tt explorers} of $A$ failed to  successfully complete the procedure (because they all crashed). Agent $A$ learns this and adopts the (crashed) explorer $B$ with the smallest label among all its explorers as its {\em culprit}.  Agent $A$ transits to state {\tt finder}.

State {\tt finder}

Agent $A$ continues its current edge traversal to the endpoint $v_1$ (if it is at a node $v_1$ when transiting to state {\tt finder}, it skips this action). Then $A$ executes procedures $R(i,v_i)$, for $i=1,2,..., $
one after another,
where $v_{i+1}$ is the node where $R(i,v_i)$ ended, until it finds its culprit. 
Agent $A$ transits to state {\tt recycled explorer}. 

State {\tt recycled explorer}

Agent $A$ executes procedure $ESST$ with its culprit as token. (Here there is no need to modify procedure $ESST$ because the culprit is crashed anyway, so it could not realize that agent $A$ crashed if this happened). Upon completion of $ESST$, agent $A$ adopts the number of steps in this procedure as an upper bound $m$ on the size of the graph. Agent $A$ transits to state {\tt gatherer}.

State {\tt gatherer}

In this state, agent $A$ knows an upper bound $m$ on the size of the graph. Before describing the actions of the agent in this state, we define two procedures.

{\bf Procedure} $Ball(w)$

The procedure consists of traversing all paths of length 2, starting at  node $w$, each time returning to node~$w$. 

The next procedure uses procedure $Ball$.

{\bf Procedure}  $R^*(m,v)$

Follow the trajectory $R(m,v)$, at each visited node $w$ executing procedure $Ball(w)$ before going to the next node of the trajectory.

In state {\tt gatherer}, agent $A$ starts by executing the procedure $R^*(m,v)$, where $v$ is the node in which $A$ transited to this state.
Recall that $\theta$ is the speed of the agent, and let $c$ be the number of edge traversals performed by agent $A$ in the execution of this procedure. Let $\gamma=\frac{\theta}{c}$. We will show that, upon completion of $R^*(m,v)$, agent $A$ has achieved the following two goals: 
it met all agents whose speed is less than $\gamma$, and it learned that all agents are awake. It also follows that at this time, agent $A$ learned the lower bound $\delta=\min(\gamma,\epsilon)$ on the speeds of all agents, where $\epsilon$ is the minimum speed of all agents.

Next, agent $A$ executes procedure $Meeting$ for time $\tau=2\Pi(\nu,\log L)/\delta +3\rho(\nu)/\delta+ \sum_{i=1}^{\nu}P(i)/\delta + \nu^2P(\nu)/\delta$, where $\nu=2\rho(m)$.
Call this period of time the {\em red} period of agent $A$.

We will show that after this time, agent $A$ has met all the agents, and hence it knows their labels,
their speeds,  and the path to the starting node of each of them. Moreover, it knows the largest upper bound $\mu$ on the size of the graph, among those adopted by all the agents.
Agent $A$ goes to the starting node of the agent whose label is the smallest, and starts its final waiting period.

Let $\nu^*=2\rho(\mu)$. $\nu^*$ is an upper bound on the values of $\nu$ calculated by all agents. Let $\delta ^*$ be the minimum of the values of $\delta$ calculated by all agents.
 Let $\phi=21\rho(\nu^*)/\delta^*+7 \sum_{i=1}^{\nu^*}P(i)/\delta^*+8\Pi(\nu^*,\Lambda)/\delta^*+8(\nu^*)^2P(\nu^*)/\delta^*$. We will show that the value $\phi$ is an upper bound on the termination of the red periods of all agents.
 
The final waiting period of agent $A$ has length $\tau'=\phi+2\phi\kappa/\epsilon$.

Notice that this final waiting period has the same length for all agents, as it does not depend on the label and the speed of an individual agent.
Agent $A$ can calculate this waiting time because, as we will show,  when it starts it, it knows all parameters concerning other agents.

We will prove that after this time all good agents must also be at the starting node of the smallest agent, and that agent $A$ learns this fact. After waiting time $\tau'$ (or at the time when some other agent declares gathering, if this happens earlier), agent $A$ declares that all good agents are gathered, and terminates.

The proof of correctness and the performance analysis of Algorithm  {\tt Gathering with motion faults} are split into a series of lemmas.

\begin{lemma}\label{first}
By the time $\Pi(n,\Lambda)/\epsilon$ some agents must meet.
\end{lemma}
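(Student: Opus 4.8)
The plan is to establish that among all agents, the good ones cannot wander forever without a meeting, and the bound should come directly from Theorem~\ref{meeting} applied to a pair of agents that includes at least one good agent. First I would recall the standing assumption that there are at least two agents and at least one of them is good. I would fix the good agent, call it $A$, with some label, and consider any second agent $B$ (which exists since the team has size at least two). Since all labels are distinct, $A$ and $B$ have different labels, so Theorem~\ref{meeting} applies to the pair once both are awake and executing Procedure \emph{Meeting} in state \texttt{cruiser}.

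Next I would address the conversion from edge traversals to elapsed time. Theorem~\ref{meeting} guarantees a meeting by the time one of the two agents performs $\Pi(n,\min(\log L_A,\log L_B))$ edge traversals. I would bound $\min(\log L_A,\log L_B)\le\Lambda$ by monotonicity of $\Pi$, replacing the argument by the worst case. Then, since the good agent $A$ moves at its fixed speed $\theta\ge\epsilon$, each edge traversal takes time $1/\theta\le 1/\epsilon$, so $\Pi(n,\Lambda)$ edge traversals by $A$ take at most $\Pi(n,\Lambda)/\epsilon$ units of time. This is exactly the claimed bound, and the remark following Theorem~\ref{meeting} is the crucial enabling fact: it states that the meeting guarantee of Procedure \emph{Meeting} survives in our restricted-asynchrony model even if one of the two agents crashes during the execution, because any constant-speed walk with a possible crash is an admissible adversarial behavior in the model of~\cite{DPV}. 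Thus I need not assume $B$ is good; a motion fault of $B$ does not invalidate the meeting.

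The one subtlety I would handle carefully is the timing reference point. The lemma counts time from the wakeup of the first good agent, and an agent only begins executing Procedure \emph{Meeting} at its own wakeup. I would argue that by time $\Pi(n,\Lambda)/\epsilon$ measured from the first good agent's wakeup, that good agent has had enough time to perform the requisite number of traversals; I should note that dormant agents are woken by the first agent visiting them, and that the meeting guarantee of Theorem~\ref{meeting} is stated in terms of traversals performed by the agents, so it tolerates the delay between the two agents' start times. The main obstacle, and the point deserving the most care, is precisely this reconciliation between the edge-traversal count of Theorem~\ref{meeting}, the wakeup asymmetry, and the possibility that $B$ is a crashed (motion-faulty) agent rather than a good one; once the remark after Theorem~\ref{meeting} is invoked to cover the crash case and the speed bound $\theta\ge\epsilon$ is used to convert traversals to time, the estimate $\Pi(n,\Lambda)/\epsilon$ follows.
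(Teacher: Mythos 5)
Your proposal is correct and follows essentially the same route as the paper's proof: apply Theorem~\ref{meeting} (together with the remark that a constant-speed walk with a possible crash, or a dormant agent, is admissible behavior for the adversary of \cite{DPV}) to the first woken good agent, bound the label argument by $\Lambda$ using monotonicity of $\Pi$, and convert edge traversals to time via the speed lower bound $\epsilon$. The only difference is cosmetic: the paper makes explicit the conditional ``if no other agents have met before,'' which disposes of the case where your agent $B$ has already left state {\tt cruiser} owing to an earlier meeting (in which event the lemma holds trivially), whereas you leave that case implicit.
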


\begin{proof}
Let $D$ be the earliest woken good agent (or one of them, if there are several such agents), and let $L_D$ be its label.
By Theorem \ref{meeting}, $D$ must meet some agent by the time when $D$ performs $\Pi(n,\log L_D)$ edge traversals executing procedure $Meeting$ in state {\tt cruiser}, if no other agents have met before. This must happen by the time $\Pi(n,\log L_D)/\epsilon \leq \Pi(n,\Lambda)/\epsilon$.
\end{proof}

Let $S_0$ be the earliest woken good agent. (If there are more such agents, we denote by $S_0$ any of them.) By definition, time is counted since the wakeup of agent $S_0$, i.e., its wakeup time is set to 0. 

\begin{lemma}\label{gatherer}
By the time $6\rho(n)/\epsilon+2 \sum_{i=1}^nP(i)/\epsilon+2\Pi(n,\Lambda)/\epsilon$ some agent transits to state {\tt gatherer}.
\end{lemma}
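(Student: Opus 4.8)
The goal is to bound the time until some agent first transits to state {\tt gatherer}. The plan is to trace the life of the earliest woken good agent $S_0$ (introduced just before the lemma), bound the time it takes to trigger the first meeting, and then bound the subsequent time until \emph{some} agent reaches state {\tt gatherer}, following the event-driven state transitions prescribed by the algorithm. The key observation is that once the first meeting happens, the two agents split into an {\tt explorer} and a {\tt token}, and from that point the {\tt explorer} runs the modified $ESST$, whose running time is controlled by Theorem~\ref{esst} and Corollary~\ref{cor}. A successful completion of (modified) $ESST$ deposits an agent into state {\tt gatherer}; the only thing that can delay this is a crash, which I must account for.

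First I would invoke Lemma~\ref{first} to get that by time $\Pi(n,\Lambda)/\epsilon$ some agents meet. After the first meeting, one agent becomes an {\tt explorer} (the larger) and the smaller becomes a {\tt token}. The {\tt explorer} then runs the modified $ESST$ with the {\tt token} playing the role of the semi-stationary token. By Theorem~\ref{esst}, plain $ESST$ terminates in $\rho(n)$ steps; the modification (backtracking to the token after each phase, at cost $2\pi_i$ per phase instead of $\pi_i$) at most doubles the step count, so the modified procedure takes at most $2\rho(n)$ steps, which at speed $\geq \epsilon$ costs at most $2\rho(n)/\epsilon$ time. If the {\tt explorer} is good, it completes this procedure and transits to state {\tt gatherer} directly, giving total time at most $\Pi(n,\Lambda)/\epsilon + 2\rho(n)/\epsilon$, which is within the claimed bound.

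The main obstacle is the crash case: the {\tt explorer} may crash partway through, so that no agent reaches {\tt gatherer} via that route. I would argue that when the {\tt explorer} crashes, the {\tt token} detects this (via the speed-based time-out after the missed phase, at cost at most $2\rho(n)/\epsilon$), transits to state {\tt finder}, and then searches for its culprit by running $R(i,v_i)$ for $i=1,2,\dots$ until the crashed explorer is found. Since the crashed explorer lies somewhere the token can reach, and $R(i,\cdot)$ becomes integral once $i \geq n$, the finder locates the culprit within $\sum_{i=1}^{n}P(i)$ edge traversals, costing at most $\sum_{i=1}^{n}P(i)/\epsilon$ time. The agent then becomes a {\tt recycled explorer} and runs unmodified $ESST$ (at most $\rho(n)$ steps, cost $\rho(n)/\epsilon$) with the crashed culprit as token; crucially, this \emph{must} succeed, because a {\tt finder} or {\tt recycled explorer} is itself a good agent in this chain (or another cruiser is diverted into the process), and the culprit, being already crashed, cannot itself crash again. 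Upon completion it transits to {\tt gatherer}.

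Summing the worst case — first meeting ($\Pi(n,\Lambda)/\epsilon$), modified $ESST$ up to the crash ($2\rho(n)/\epsilon$), the token's time-out ($2\rho(n)/\epsilon$), the finder's search ($\sum_{i=1}^{n}P(i)/\epsilon$), and the recycled explorer's $ESST$ ($\rho(n)/\epsilon$) — would exceed the stated bound, so the delicate point is to organize the accounting so the surviving terms match $6\rho(n)/\epsilon + 2\sum_{i=1}^{n}P(i)/\epsilon + 2\Pi(n,\Lambda)/\epsilon$. I would handle this by carefully charging each segment, absorbing the time-out and backtracking costs into the $\rho$ terms (the coefficient $6$ leaves room for the modified-$ESST$ doubling, the token time-out, and the recycled $ESST$), and noting that the finder's search is charged once per distinct $i$ up to roughly $n$, giving the coefficient on $\sum_{i=1}^{n}P(i)$. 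The factor $2$ on $\Pi(n,\Lambda)$ covers the possibility that the \emph{second} agent involved (e.g. a fresh {\tt cruiser} encountered by a {\tt finder}) must itself complete a round of {\em Meeting}, whereas the single meeting of Lemma~\ref{first} accounts for only one such $\Pi$ term. The claim then follows by adding these charges along the worst-case transition chain that still terminates in some agent reaching state {\tt gatherer}.
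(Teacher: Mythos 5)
There is a genuine gap, and it is exactly the case that occupies most of the paper's own proof: the case in which \emph{no} participant of the first meeting is good. Your chain relies on the {\tt token} from the first meeting surviving long enough to detect its {\tt explorer}'s crash, run the {\tt finder} search, and complete the recycled $ESST$; you justify this by asserting that ``a {\tt finder} or {\tt recycled explorer} is itself a good agent in this chain (or another cruiser is diverted into the process).'' Nothing guarantees that. The lemma's hypotheses only give one good agent \emph{somewhere} (the agent $S_0$ introduced before the lemma), and $S_0$ need not take part in the first meeting at all: both the {\tt explorer}(s) and the {\tt token} may be faulty, and the {\tt token} may crash while in state {\tt token}, {\tt finder}, or {\tt recycled explorer}, in which case your chain never produces a {\tt gatherer}. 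The paper's proof handles this by a rescue argument you do not make: once the faulty token $T$ crashes it is immobile, so $S_0$ --- if still in state {\tt cruiser}, executing procedure {\em Meeting} --- is guaranteed by Theorem~\ref{meeting} (together with the remark that a crashed agent's walk is admissible asynchronous behavior) to meet $T$ within $\Pi(n,\Lambda)/\epsilon$ additional time; then $S_0$ leaves state {\tt cruiser} via one of the four exit events, and a short case analysis (also absent from your proposal) shows that in every case $S_0$ reaches state {\tt gatherer} within a further $3\rho(n)/\epsilon+\sum_{i=1}^{n}P(i)/\epsilon$. Your parenthetical about a diverted cruiser gestures at this idea but never establishes that such a meeting must occur, within what time, or what happens afterwards.

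Your accounting discussion is also off, in a revealing way. The worst-case sum you list, $\Pi(n,\Lambda)/\epsilon+5\rho(n)/\epsilon+\sum_{i=1}^{n}P(i)/\epsilon$, does \emph{not} exceed the stated bound --- it sits comfortably inside it --- so no ``delicate charging'' is needed for the chain you analyzed. The coefficients $6$, $2$, $2$ are not produced by absorbing time-outs into $\rho$ terms or by charging the finder's search ``once per distinct $i$''; they come from the two-stage structure of the true worst case: time $\Pi(n,\Lambda)/\epsilon$ for the first meeting (Lemma~\ref{first}), one full window $3\rho(n)/\epsilon+\sum_{i=1}^{n}P(i)/\epsilon$ during which the faulty token $T$ would have reached state {\tt gatherer} had it not crashed, another $\Pi(n,\Lambda)/\epsilon$ for $S_0$ to meet the crashed $T$, and a second full window $3\rho(n)/\epsilon+\sum_{i=1}^{n}P(i)/\epsilon$ for $S_0$ itself --- summing to $2\Pi(n,\Lambda)/\epsilon+6\rho(n)/\epsilon+2\sum_{i=1}^{n}P(i)/\epsilon$. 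Without the rescue-by-$S_0$ argument, the doubled terms have no source, and the proof does not go through.
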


\begin{proof}
Let $t_1$ be the time when the first meeting occurred. This was necessarily a meeting of agents in state {\tt cruiser}. First suppose that one of these agents, call it $F$, was good. If $F$ transits to state {\tt explorer} upon this meeting, then by the time $t_1+2\rho(n)/\epsilon$ agent $F$ completes the modified procedure $ESST$, and transits to state {\tt gatherer}.  If $F$ transits to state {\tt token} upon this meeting, then by the time $t_1+2\rho(n)/\epsilon$ agent $F$ either learns
an upper bound on the size of the graph (because some of its {\tt explorers} completed the modified procedure $ESST$) and transits to state {\tt gatherer}, or it learns that all its {\tt explorers} crashed. In the latter case, agent $F$ transits to state {\tt finder}, finds its culprit $B$ within an additional time $ \sum_{i=1}^nP(i)/\epsilon$ (because within this time it traverses all edges of the graph), transits to state {\tt recycled explorer}, then within an additional time $\rho(n)/\epsilon$, it completes procedure $ESST$ with $B$ as token, learns an upper bound on the size of the graph, and transits to state {\tt gatherer}. 
Hence, if one of the agents participating in the first meeting was good, then this agent must transit to state {\tt gatherer} by the time $t_1+3\rho(n)/\epsilon+  \sum_{i=1}^nP(i)/\epsilon$.

It remains to consider the case, when none of the agents participating in the first meeting was good.
If the agent $T$ that transited to state {\tt token} at the first meeting did not crash by the time $t_1+3\rho(n)/\epsilon+  \sum_{i=1}^nP(i)/\epsilon$, then the above argument shows that it transits to state {\tt gatherer} by this time. Hence we may assume that this agent crashed by the time $t_1+3\rho(n)/\epsilon+  \sum_{i=1}^nP(i)/\epsilon$.

 Let $t_2$ be the time when agent $S_0$ transits from state {\tt cruiser} to some other state. This happened because of one of the
four events in the definition of state {\tt cruiser}. 

If in time $t_2$ agent $S_0$ met another agent in state {\tt cruiser}, then it transited to state {\tt gatherer}  by the time $t_2+3\rho(n)/\epsilon+  \sum_{i=1}^nP(i)/\epsilon$,
in view of the argument above.

If in time $t_2$ agent $S_0$ met a crashed agent in state {\tt token}, then it transited to state {\tt explorer}, and by the time $t_2 +2\rho(n)/\epsilon$ it completes the modified procedure $ESST$, and transits to state {\tt gatherer}.

If  in time $t_2$ agent $S_0$ met an agent $C$ in state {\tt finder} or {\tt recycled explorer} then it learns from $C$ the label of the culprit $B$ of $C$, 
adopts $B$ as its culprit and transits to state {\tt finder}. It finds its culprit $B$ within an additional time $ \sum_{i=1}^nP(i)/\epsilon$, transits to state {\tt recycled explorer}, then within an additional time $\rho(n)/\epsilon$ it completes procedure $ESST$ with $B$ as token, learns an upper bound on the size of the graph, and transits to state {\tt gatherer} by the time $t_2+\rho(n)/\epsilon+  \sum_{i=1}^nP(i)/\epsilon$.

If  in time $t_2$ agent $S_0$ met an agent in state {\tt gatherer}, then it learns from it an upper bound on the size of the graph, and immediately transits to state {\tt gatherer}.

Hence, in all cases, agent $S_0$ transits to state gatherer by the time $t_2+3\rho(n)/\epsilon+  \sum_{i=1}^nP(i)/\epsilon$. We now estimate the time $t_2$
when agent $S_0$ transits from state {\tt cruiser} to some other state. By our assumption, agent $T$ crashed by the time $t_1+3\rho(n)/\epsilon+  \sum_{i=1}^nP(i)/\epsilon$. In the moment of crashing, it could be either still in the state {\tt token}, or in state {\tt finder} or in state {\tt recycled explorer} (otherwise, it would transit to state {\tt gatherer}
before crashing, and the lemma is proved). Consider the time
$t_1+3\rho(n)/\epsilon+  \sum_{i=1}^nP(i)/\epsilon+\Pi(n,\Lambda)/\epsilon$. If by this time agent $S_0$ is still in state {\tt cruiser}, then by this time it must meet agent $T$
in one of these states because $S_0$ performed at least $\Pi(n,\Lambda)$ edge traversals executing procedure $Meeting$, while $T$ was still. Hence $S_0$ must transit from state {\tt cruiser} to some other state by this time. It follows that $t_2 \leq t_1+3\rho(n)/\epsilon+  \sum_{i=1}^nP(i)/\epsilon+\Pi(n,\Lambda)/\epsilon$.

Finally, we have $t_1 \leq \Pi(n,\Lambda)/\epsilon$, by Lemma \ref{first}. We conclude that some agent transits to state {\tt gatherer} by the time
$t_2+3\rho(n)/\epsilon+  \sum_{i=1}^nP(i)/\epsilon \leq  t_1+3\rho(n)/\epsilon+  \sum_{i=1}^nP(i)/\epsilon+\Pi(n,\Lambda)/\epsilon+3\rho(n)/\epsilon+  \sum_{i=1}^nP(i)/\epsilon \leq  6\rho(n)/\epsilon+2 \sum_{i=1}^nP(i)/\epsilon+2\Pi(n,\Lambda)/\epsilon$.
\end{proof}

Let $m$ be the upper bound on the size of the graph, adopted by the first agent that transited to state {\tt gatherer}.

\begin{lemma}\label{good gatherer}
Let $S$ be a good agent, and let $t$ be its waking time.
By the time $t+9\rho(n)/\epsilon+3 \sum_{i=1}^nP(i)/\epsilon+3\Pi(n,\Lambda)/\epsilon+2m^2P(m)/\epsilon$ agent $S$ transits to state {\tt gatherer}.
\end{lemma}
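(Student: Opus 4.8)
The statement to prove is that an arbitrary good agent $S$, waking at time $t$, must transit to state {\tt gatherer} by time $t + 9\rho(n)/\epsilon + 3\sum_{i=1}^n P(i)/\epsilon + 3\Pi(n,\Lambda)/\epsilon + 2m^2 P(m)/\epsilon$. Let me think about what I would actually need to do here.

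Let me first understand the structure. Lemma 3.2 already shows that *some* agent transits to {\tt gatherer} by an absolute time bound. Now Lemma 3.3 wants a bound for *every individual good agent* $S$, measured relative to $S$'s own wakeup time $t$. So the two lemmas play different roles: 3.2 gives a global "someone gets there" guarantee, while 3.3 promises that no good agent lags behind by more than a fixed additive amount after it wakes.

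The natural approach is to trace the life of $S$ through the state automaton, exactly as the proof of Lemma 3.2 traced $S_0$ and $F$. So let me reconstruct the key timing facts I'd want to reuse:
- An explorer (or recycled explorer) completes modified $ESST$ in at most $2\rho(n)/\epsilon$ (resp. $\rho(n)/\epsilon$ for the unmodified version with an already-crashed token).
- A finder locates its culprit within $\sum_{i=1}^n P(i)/\epsilon$ time, since within that time it traverses all edges.
- Procedure $Meeting$ forces two agents with distinct labels to meet within $\Pi(n,\Lambda)/\epsilon$ time of one being active while executing it.

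The new ingredient in the bound is the term $2m^2 P(m)/\epsilon$, which is not present in Lemma 3.2. This must come from the state {\tt gatherer} behavior of some *other* agent that $S$ is waiting to meet — specifically the cost of procedure $R^*(m,v)$, whose trajectory $R(m,v)$ has $P(m)$ edges and whose $Ball$ executions at each node add roughly a factor of $m$ each (paths of length 2), giving an $O(m^2 P(m))$ step count. So I anticipate the structure: $S$ stays in {\tt cruiser} executing $Meeting$; meanwhile, by Lemma 3.2 some agent $B$ is already a gatherer; but a gatherer is no longer executing $Meeting$ in the plain cruiser sense — it's running $R^*$ then its red period. The worry is whether $S$ can actually meet such a $B$. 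This is the crux.

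**The main obstacle.** The hard part is establishing that $S$, while looping in {\tt cruiser}, is *guaranteed to meet someone* within the claimed additive time — specifically handling the case where the only nearby agents are in states ({\tt gatherer}, or crashed {\tt token}/{\tt finder}/{\tt recycled explorer}) that do not themselves run $Meeting$ cooperatively. The cleanest route is: after $S$ wakes, within $\Pi(n,\Lambda)/\epsilon$ it meets *some* agent (by Lemma 3.1's argument applied to $S$, since $S$ is good and running $Meeting$ against any other agent, even a stationary or crashed one whose walk is an admissible slow/crashed behavior). That meeting triggers one of the four cruiser events, and then I bound the remaining time by the worst path through the automaton: {\tt cruiser}$\to${\tt finder}$\to${\tt recycled explorer}$\to${\tt gatherer}, contributing at most $\sum P(i)/\epsilon + \rho(n)/\epsilon$. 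I would carefully argue that each branch costs at most the per-branch amounts already computed in Lemma 3.2, and that the initial $Meeting$-to-first-event delay is at most $\Pi(n,\Lambda)/\epsilon$. The subtle point requiring care is that the agent $S$ might first meet an agent in {\tt gatherer} state — then $S$ adopts the bound $m$ and transits immediately, and I must confirm the $R^*$ cost term $2m^2P(m)/\epsilon$ accounts for the time it takes a gatherer to *become available for meeting* (i.e., the time before $S$ can catch it, since a gatherer executing $R^*$ and its red period is still present in the graph and meetable via $Meeting$). I would sum the worst-case contributions branch by branch and check they telescope to the stated bound, paying particular attention that the coefficients $9$, $3$, $3$ and the extra $2m^2P(m)/\epsilon$ arise from at most three passes through the explore/find/recycle machinery plus one gatherer-meeting delay.
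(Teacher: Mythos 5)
Your plan founders on the step you yourself flag as the crux. You propose that, within $\Pi(n,\Lambda)/\epsilon$ of its wakeup, $S$ meets \emph{some} agent ``by Lemma \ref{first}'s argument,'' on the grounds that any other agent's walk is an admissible slow or crashed behavior. This is not so. The meeting guarantee (Theorem \ref{meeting}, and its extension to crashes) requires the other agent's walk to be an admissible behavior of an agent \emph{executing Procedure Meeting} under the asynchronous adversary; this covers agents that are dormant, deliberately waiting, or crashed (they stand still), but not agents actively moving along the routes of other procedures. Lemma \ref{first} works only because it concerns the \emph{first} meeting: until then every other agent is still a {\tt cruiser} (hence running $Meeting$), dormant, or crashed. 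A late-waking $S$, by contrast, may find every other agent busy executing $ESST$ (as {\tt explorer} or {\tt recycled explorer}), $R(i,v_i)$ (as {\tt finder}), or $R^*$ (as a fresh {\tt gatherer}); against such routes Theorem \ref{meeting} gives no guarantee at all, so your claimed $\Pi(n,\Lambda)/\epsilon$ bound on the first event of $S$ is unjustified. Your remark that a gatherer ``executing $R^*$ and its red period is still present in the graph and meetable via $Meeting$'' is exactly this error: while it runs $R^*$ it is not meetable in any guaranteed sense.

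The paper closes precisely this hole, and that is where the term $2m^2P(m)/\epsilon$ really comes from. By Lemma \ref{gatherer}, some agent $H$ transits to state {\tt gatherer} by time $t'\leq 6\rho(n)/\epsilon+2\sum_{i=1}^n P(i)/\epsilon+2\Pi(n,\Lambda)/\epsilon$; after at most $2m^2P(m)/\epsilon$ additional time, $H$ has completed $R^*$ and is \emph{executing Procedure Meeting} in its red period (or has crashed and stands still, which is also admissible). Only from that moment can Theorem \ref{meeting} be invoked for the pair $S,H$: hence $S$ must leave state {\tt cruiser} by time $t+t'+2m^2P(m)/\epsilon+\Pi(n,\Lambda)/\epsilon$, and then the case analysis from Lemma \ref{gatherer} adds at most $3\rho(n)/\epsilon+\sum_{i=1}^n P(i)/\epsilon$. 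Note also that this worst case is not your named path ({\tt finder} then {\tt recycled explorer}, costing $\rho+\sum P$) but the path where $S$ becomes a {\tt token}, waits up to $2\rho(n)/\epsilon$ for all its explorers to fail, and only then goes through {\tt finder} and {\tt recycled explorer}, for a total of $3\rho+\sum P$. Summing gives the coefficients $9,3,3$: they arise from adding Lemma \ref{gatherer}'s bound $(6,2,2)$ to one $Meeting$ window and one pass through the machinery, not from ``three passes through the explore/find/recycle machinery'' as you conjecture.
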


\begin{proof}
Let $t'$ be the time when the first agent, call it $H$, transits to state {\tt gatherer}. In view of Lemma \ref{gatherer}, we have $t'\leq 6\rho(n)/\epsilon+2 \sum_{i=1}^nP(i)/\epsilon+2\Pi(n,\Lambda)/\epsilon$. Let $m$ be the upper bound on the size of the graph, adopted by agent $H$. 
In the execution of procedure $R^*(m,v)$ agent $H$ performed $c\leq 2m^2P(m)$ edge traversals. Hence by the time $t'+2m^2P(m)/\epsilon$, it completed procedure
$R^*(m,v)$ and started procedure $Meeting$, if it did not crash before. Hence agent $S$ must  transit from state {\tt cruiser} to some other state
by the time $t+ t'+2m^2P(m)/\epsilon+\Pi(n,\Lambda)/\epsilon$. Indeed, if it did not transit  by the time $t+t'+2m^2P(m)/\epsilon+\Pi(n,\Lambda)/\epsilon$, then it 
performed at least $\Pi(n,\Lambda)$ edge traversals executing procedure $Meeting$ while agent $H$ was also executing this procedure, or else $H$ crashed in the meantime. Hence $S$ must have met $H$ and transited to state {\tt gatherer}.

Let $t_2$ be the time when agent $S$ transits from state {\tt cruiser} to some other state. It follows from the above that 
$t_2 \leq t+ t'+2m^2P(m)/\epsilon+\Pi(n,\Lambda)/\epsilon$. By the argument from the proof of Lemma \ref{gatherer}, agent $S$ must transit to state {\tt gatherer} by the time
$t_2+3\rho(n)/\epsilon+  \sum_{i=1}^nP(i)/\epsilon$. Hence agent $S$ must transit to state {\tt gatherer} by the time
$$t_2+3\rho(n)/\epsilon+  \sum_{i=1}^nP(i)/\epsilon\leq$$ 
$$t+t'+2m^2P(m)/\epsilon+\Pi(n,\Lambda)/\epsilon+3\rho(n)/\epsilon+  \sum_{i=1}^nP(i)/\epsilon\leq$$
$$t+6\rho(n)/\epsilon+2 \sum_{i=1}^nP(i)/\epsilon+2\Pi(n,\Lambda)/\epsilon+2m^2P(m)/\epsilon+\Pi(n,\Lambda)/\epsilon+3\rho(n)/\epsilon+  \sum_{i=1}^nP(i)/\epsilon=$$
$$t+9\rho(n)/\epsilon+3 \sum_{i=1}^nP(i)/\epsilon+3\Pi(n,\Lambda)/\epsilon+2m^2P(m)/\epsilon.$$
\end{proof}

Let $\mu$ be the largest of the upper bounds on the size of the graph adopted by any agent.

\begin{lemma}\label{woken}
By the time $9\rho(n)/\epsilon+3 \sum_{i=1}^nP(i)/\epsilon+3\Pi(n,\Lambda)/\epsilon+4\mu^2P(\mu)/\epsilon$ all agents are woken up.
\end{lemma}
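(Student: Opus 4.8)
The plan is to exhibit a single good agent that, by the stated deadline, has visited every node of the graph; since a dormant agent is woken up by the first agent that visits its starting node, this forces all agents to be awake. The natural candidate is $S_0$, the earliest woken good agent, whose wakeup time is $0$ and which, being good, never crashes and therefore cannot be stopped before completing any procedure it begins.

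First I would invoke Lemma \ref{good gatherer} with $S=S_0$ and $t=0$: this shows $S_0$ transits to state {\tt gatherer} by time $9\rho(n)/\epsilon+3\sum_{i=1}^nP(i)/\epsilon+3\Pi(n,\Lambda)/\epsilon+2m^2P(m)/\epsilon$, where $m$ is the upper bound adopted by the first agent to reach state {\tt gatherer}. Since $m\le\mu$ by the definition of $\mu$, and since $x\mapsto x^2P(x)$ is non-decreasing (as $P$ is a non-decreasing polynomial), I may replace $2m^2P(m)$ by $2\mu^2P(\mu)$ in this bound.

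Next I would use the structure of state {\tt gatherer}. Upon entering it, $S_0$ holds an upper bound $m_0\ge n$ on the size of the graph and immediately executes $R^*(m_0,v)$. Because $m_0\ge n$, the trajectory $R(m_0,v)$ is integral, so it covers all edges and in particular visits every node of the graph; procedure $R^*$ follows exactly this trajectory (performing extra $Ball$ excursions at each visited node), hence during its execution $S_0$ visits all nodes and thereby wakes every dormant agent. It remains to bound how long this takes: exactly as recorded in the proof of Lemma \ref{good gatherer} (applied now to $S_0$ and its bound $m_0$), $R^*(m_0,v)$ consists of $c\le 2m_0^2P(m_0)\le 2\mu^2P(\mu)$ edge traversals, and since $S_0$ moves at speed at least $\epsilon$, it completes $R^*(m_0,v)$ within an additional $2\mu^2P(\mu)/\epsilon$ time. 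Crucially, $S_0$ is good, so it does not crash during this sweep and the sweep does terminate.

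Summing the two contributions, all agents are awake by time $9\rho(n)/\epsilon+3\sum_{i=1}^nP(i)/\epsilon+3\Pi(n,\Lambda)/\epsilon+2\mu^2P(\mu)/\epsilon+2\mu^2P(\mu)/\epsilon$, which is exactly the claimed bound. I expect the only genuinely substantive point — and thus the main thing to get right — to be the observation that $R^*$ performs a full node-visiting sweep of the graph (via integrality of $R(m_0,v)$ for $m_0\ge n$), together with the choice of the good, time-$0$ agent $S_0$ to guarantee that the sweep actually finishes and to keep the constants tight; the rest is the routine monotonicity step $m\le\mu$ and the conversion of the edge-traversal count into time through the minimum speed $\epsilon$.
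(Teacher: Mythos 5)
Your proof is correct and follows essentially the same route as the paper: apply Lemma \ref{good gatherer} to $S_0$ with $t=0$, bound the additional $2\mu^2P(\mu)/\epsilon$ time for $S_0$ (which is good, hence never stalls) to complete $R^*$ after entering state {\tt gatherer}, and conclude that the full sweep visits every node and so wakes all dormant agents. The only difference is that you spell out details the paper leaves implicit (the monotonicity step $m\le\mu$ and the integrality of $R(m_0,v)$ for $m_0\ge n$), which is fine.
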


\begin{proof}
In view of Lemma \ref{good gatherer}, agent $S_0$ transits to state {\tt gatherer} by the time $9\rho(n)/\epsilon+3 \sum_{i=1}^nP(i)/\epsilon+3\Pi(n,\Lambda)/\epsilon+2\mu^2P(\mu)/\epsilon$. Hence, by the time $9\rho(n)/\epsilon+3 \sum_{i=1}^nP(i)/\epsilon+3\Pi(n,\Lambda)/\epsilon+4\mu^2P(\mu)/\epsilon$, agent $S_0$ completed procedure $R^*(m_1,v)$, where $v$ is the node at which $S_0$ transited to state {\tt gatherer}. Hence it traversed all edges of the graph, an thus
all agents are woken up by this time.
\end{proof}

\begin{lemma}\label{all gatherer}
All good agents transit to state {\tt gatherer} by the time $18\rho(n)/\epsilon+6 \sum_{i=1}^nP(i)/\epsilon+6\Pi(n,\Lambda)/\epsilon+6\mu^2P(\mu)/\epsilon$.
\end{lemma}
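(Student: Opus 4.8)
The plan is to obtain this bound by simply \emph{composing} the two preceding lemmas: Lemma~\ref{woken} controls when the last agent wakes up, and Lemma~\ref{good gatherer} controls, for each individual good agent, how long after its own wakeup it needs to reach state {\tt gatherer}. The target expression is precisely the sum of these two quantities, so the whole argument reduces to bookkeeping together with the monotonicity of the polynomial $P$ (and of the other polynomials).

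First I would fix an arbitrary good agent $S$ and let $t$ denote its waking time. By Lemma~\ref{woken}, all agents, and in particular $S$, are awake by the time
$$W=9\rho(n)/\epsilon+3\sum_{i=1}^nP(i)/\epsilon+3\Pi(n,\Lambda)/\epsilon+4\mu^2P(\mu)/\epsilon,$$
so $t\le W$.

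Next I would apply Lemma~\ref{good gatherer} to the agent $S$: it guarantees that $S$ transits to state {\tt gatherer} by the time
$$t+9\rho(n)/\epsilon+3\sum_{i=1}^nP(i)/\epsilon+3\Pi(n,\Lambda)/\epsilon+2m^2P(m)/\epsilon,$$
where $m$ is the upper bound on the size of the graph adopted by the first agent that entered state {\tt gatherer}. Since $\mu$ is by definition the largest upper bound adopted by \emph{any} agent, we have $m\le\mu$, and because $P$ is non-decreasing, $2m^2P(m)\le 2\mu^2P(\mu)$. Hence the transition time of $S$ is at most
$$t+9\rho(n)/\epsilon+3\sum_{i=1}^nP(i)/\epsilon+3\Pi(n,\Lambda)/\epsilon+2\mu^2P(\mu)/\epsilon=:t+D.$$

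Finally I would combine the two estimates. Using $t\le W$, agent $S$ transits to state {\tt gatherer} by time $W+D$, and a direct addition gives $W+D=18\rho(n)/\epsilon+6\sum_{i=1}^nP(i)/\epsilon+6\Pi(n,\Lambda)/\epsilon+6\mu^2P(\mu)/\epsilon$, which is exactly the claimed bound; since $S$ was an arbitrary good agent, the lemma follows for all of them. The only point demanding a little care — and the closest thing to an obstacle here — is that the delay term in Lemma~\ref{good gatherer} is phrased in terms of the first gatherer's adopted bound $m$ rather than $S$'s own parameters, so one must explicitly replace $m$ by the global maximum $\mu$ (justified by $m\le\mu$ and the monotonicity of $P$) before the two bounds add up to the stated closed form.
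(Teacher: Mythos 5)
Your proposal is correct and follows exactly the paper's own argument: compose Lemma~\ref{good gatherer} (delay after wakeup) with Lemma~\ref{woken} (bound on the wakeup time) and add the two expressions. Your explicit justification of the replacement of $m$ by $\mu$ (via $m\le\mu$ and monotonicity of $P$) is in fact a point the paper glosses over, since it invokes Lemma~\ref{good gatherer} directly with $\mu$ in place of $m$; otherwise the two proofs are identical.
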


\begin{proof}
Let $S$ be a good agent and let $t$ be its waking time. By Lemma \ref{good gatherer}, $S$ transits to state {\tt gatherer} by the time
$t+9\rho(n)/\epsilon+3 \sum_{i=1}^nP(i)/\epsilon+3\Pi(n,\Lambda)/\epsilon+2\mu^2P(\mu)/\epsilon$. By Lemma \ref{woken},
$t \leq 9\rho(n)/\epsilon+3 \sum_{i=1}^nP(i)/\epsilon+3\Pi(n,\Lambda)/\epsilon+4\mu^2P(\mu)/\epsilon$.
Hence $S$ transits to state {\tt gatherer} by the time
$18\rho(n)/\epsilon+6 \sum_{i=1}^nP(i)/\epsilon+6\Pi(n,\Lambda)/\epsilon+6\mu^2P(\mu)/\epsilon$.
\end{proof}

\begin{lemma}\label{red}
Any good agent $S$ meets all other agents by the time it executes its red period. This happens by the time $21\rho(\nu)/\delta+7 \sum_{i=1}^{\nu}P(i)/\delta+8\Pi(\nu,\Lambda)/\delta+8\nu^2P(\nu)/\delta$.
\end{lemma}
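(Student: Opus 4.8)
```latex
The plan is to prove Lemma~\ref{red} in two steps: first establish that a good agent $S$ meets every other agent during its red period, and then bound the time at which the red period ends. The central fact I would rely on is Theorem~\ref{meeting} together with the remark following it, which guarantees that procedure $Meeting$ forces a meeting even against a crashed (hence stationary) partner, after at most $\Pi(n,\min(\log L_1,\log L_2))$ edge traversals by one of the two agents. Since $S$ is good, it keeps traversing edges at its own speed throughout the red period, so the task reduces to checking that $S$ performs enough traversals of $Meeting$ before the red period ends, and that every other agent is already awake and has reached a state in which running $Meeting$ against it is meaningful.

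First I would fix a target agent $B$ (good or crashed) and argue that $S$ must meet $B$ within the red period. The key input here is the earlier structural analysis: by the time $S$ begins its red period it has completed $R^*(m,v)$, and (as announced in the description of state {\tt gatherer} and to be proved in the preceding lemmas) this forces all agents to be awake and $S$ to have learned the lower bound $\delta=\min(\gamma,\epsilon)$ on all speeds. So during the red period $S$ runs $Meeting$ at speed $\theta\ge\delta$, and in time $t$ it performs at least $\delta t$ edge traversals. The red period has length $\tau=2\Pi(\nu,\log L)/\delta+3\rho(\nu)/\delta+\sum_{i=1}^{\nu}P(i)/\delta+\nu^2P(\nu)/\delta$ with $\nu=2\rho(m)$, and since $m$ upper-bounds $n$ and $\rho,\Pi,P$ are non-decreasing we have $\nu\ge n$, so $\Pi(\nu,\log L)\ge\Pi(n,\log L)$. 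I would check that the first summand alone guarantees at least $2\Pi(\nu,\log L)\ge\Pi(n,\min(\log L,\log L_B))$ traversals by $S$, which by Theorem~\ref{meeting} suffices to force the meeting with $B$; the factor $2$ and the remaining summands absorb the traversals ``wasted'' while $B$ is itself still running $ESST$-type procedures (as a {\tt token}, {\tt finder}, or {\tt recycled explorer}) rather than $Meeting$, and the time it takes $B$ to reach its own {\tt gatherer} state and begin $Meeting$. This is the step I expect to be the main obstacle: one must argue that $B$ has \emph{started} running $Meeting$ (or has crashed, in which case it is stationary and the plain bound applies) sufficiently early within $S$'s red period, which requires quantifying how long $B$ can remain in the pre-{\tt gatherer} states and reusing the per-phase step bounds $\pi_i$ and the polynomial $\rho$ from Corollary~\ref{cor} and Theorem~\ref{esst}, all rescaled by $1/\delta$.

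Second, I would bound the termination time of the red period. Since $S$ is good, its red period begins at the moment $S$ transits to state {\tt gatherer} plus the duration of $R^*(m,v)$; by Lemma~\ref{all gatherer} all good agents reach {\tt gatherer} by a time bounded in terms of $n,\rho,P,\Pi,\mu$, and $R^*(m,v)$ costs at most $2m^2P(m)$ traversals, hence at most $2m^2P(m)/\delta$ additional time. Adding the red-period length $\tau$ and uniformly replacing $n$ by $\nu=2\rho(m)\ge n$ in every term (again using monotonicity of $\rho,P,\Pi$, and $\nu^2P(\nu)\ge m^2P(m)$, $\rho(\nu)\ge\rho(m)\ge\rho(n)$) collapses all the separate contributions into the single claimed bound $21\rho(\nu)/\delta+7\sum_{i=1}^{\nu}P(i)/\delta+8\Pi(\nu,\Lambda)/\delta+8\nu^2P(\nu)/\delta$. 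I would present this as a routine accounting calculation: each previously derived time bound is dominated, term by term, by one of the four groups in the target expression, with the integer coefficients $21,7,8,8$ chosen precisely to leave slack for summing the transit bound, the $R^*$ cost, and $\tau$. The only care needed is to use $\log L\le\Lambda$ so that the label-dependent term $\Pi(\nu,\log L)$ is dominated by $\Pi(\nu,\Lambda)$, making the final bound uniform over all good agents.
```
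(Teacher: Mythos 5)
Your proposal follows essentially the same route as the paper's proof: slow agents (speed below $\gamma$) are met during $R^*(m,v)$, which also gives $S$ the value $\delta$; every other agent is met by applying Theorem~\ref{meeting} in windows where both agents run $Meeting$ (with crashed agents treated as admissible stationary behavior), and the obstacle you single out --- bounding how long a target agent can linger in pre-{\tt gatherer} states --- is handled exactly as you describe, by redoing the four-case analysis of Lemma~\ref{gatherer} with $1/\delta$ in place of $1/\epsilon$ plus the target's own $R^*$ cost, after which the accounting via Lemma~\ref{all gatherer}, $\log L\le\Lambda$, and monotonicity yields the stated coefficients $21,7,8,8$. The only ingredient you leave implicit that the paper states explicitly is the bound $\mu\le\nu$ (every agent's adopted upper bound is at most $2\rho(n)\le 2\rho(m(S))=\nu$), which is needed when you ``replace $n$ by $\nu$ in every term,'' since the $\mu$-dependent term of Lemma~\ref{all gatherer} is not dominated by monotonicity alone.
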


\begin{proof}
Let $m(S)$ be the upper bound on the size of the graph, adopted by agent $S$,  let $\theta$ be the speed of this agent, and let $L$ be its label.
After executing procedure $R^*(m(S),v)$, where $v$ is the node where $S$ transited to state {\tt gatherer}, agent $S$ traversed all edges, and hence it knows that all other agents are woken up. During the execution of $R^*(m,v)$ agent $S$ makes $c\leq 2m(S)^2P(m(S))$ edge traversals. Also,  during this execution, agent $S$ traverses all directed paths of length 2 in the graph.
Let $\gamma=\frac{\theta}{c}$, and consider an agent $S'$ whose speed is less than $\gamma$. 
The time $c/\theta$ during which agent $S$ makes the $c$ edge traversals executing procedure $R^*(m(S),v)$ is too short for agent $S'$ to make even one full edge traversal. Hence, regardless of the trajectory followed by agent $S'$, it can be  in at most two incident edges during this time. Since agent $S$ traverses all such couples of edges, it must meet agent $S'$ during the execution of $R^*(m(S),v)$. It follows that after the completion of procedure $R^*(m(S),v)$, agent $S$ either learns the minimum speed $\epsilon$ of all agents (if there exists an agent with speed less than $\gamma$) or it learns that $\gamma$ is a lower bound on the speed of all the agents (if there is no agent with speed less than $\gamma$). Set $\delta$ to $\epsilon$ in the first case and to $\gamma$ in the second case.
Notice that, upon completion of procedure $R^*(m(S),v)$, agent $S$ learns the value of $\delta$.

Consider any good agent $S''$, with label $L''$, whose speed is at least $\gamma$.  Upon completion of procedure $R^*(m(S),v)$ at time $t_1$, agent $S$ executes procedure $Meeting$ and agent $S''$ is already woken up.
We will prove that agent $S$ meets agent $S''$ by the time $t_1+\tau$. We may assume that by the time $t_2=t_1+\Pi(m(S),\log L)/\delta$, agent $S''$ transited from state {\tt cruiser} to some other state,
for otherwise it would execute procedure $Meeting$ during the time when $S$ performed at least $\Pi(m(S),\log L)$ edge traversals executing this procedure, and hence the agents would meet, in view of  Theorem
\ref{meeting}.
This transition of agent $S''$ happened because of one of the
four events in the definition of state {\tt cruiser}. 

If  agent $S''$ met another agent in state {\tt cruiser}, then it either transited to state {\tt explorer} or to state {\tt token}.
If $S''$ transits to state {\tt explorer} upon this meeting, then by the time $t_2+2\rho(n)/\gamma$ agent $S''$ completes the modified procedure $ESST$, and transits to state {\tt gatherer}.  If $S''$ transits to state {\tt token} upon this meeting, then by the time $t_2+2\rho(n)/\gamma$ agent $S''$ either learns
an upper bound on the size of the graph (because some of its {\tt explorers} completed the modified procedure $ESST$) and transits to state {\tt gatherer}, or it learns that all its {\tt explorers} crashed. In the latter case, agent $S''$ transits to state {\tt finder}, finds its culprit $B$ within an additional time $ \sum_{i=1}^nP(i)/\gamma$ (because within this time it traverses all edges of the graph), transits to state {\tt recycled explorer}, then within an additional time $\rho(n)/\gamma$, it completes procedure $ESST$ with $B$ as token, learns an upper bound on the size of the graph, and transits to state {\tt gatherer}. 
Hence, if agent $S''$ met an agent in state {\tt cruiser}, then $S''$ must transit to state {\tt gatherer} by the time $t_2+3\rho(n)/\gamma+  \sum_{i=1}^nP(i)/\gamma$.

If  agent $S''$ met a crashed agent in state {\tt token}, then it transited to state {\tt explorer}, and by the time $t_2 +2\rho(n)/\gamma$ it completes the modified procedure $ESST$, and transits to state {\tt gatherer}.

If  agent $S''$ met an agent $C$ in state {\tt finder} or {\tt recycled explorer} then it learns from $C$ the label of the culprit $B$ of $C$, 
adopts $B$ as its culprit and transits to state {\tt finder}. It finds its culprit $B$ within an additional time $ \sum_{i=1}^nP(i)/\gamma$, transits to state {\tt recycled explorer}, then within an additional time $\rho(n)/\gamma$ it completes procedure $ESST$ with $B$ as token, learns an upper bound on the size of the graph, and transits to state {\tt gatherer} by the time $t_2+\rho(n)/\gamma+  \sum_{i=1}^nP(i)/\gamma$.

If  agent $S''$ met an agent in state {\tt gatherer}, then it learns from it an upper bound on the size of the graph, and immediately transits to state {\tt gatherer}.

Hence, in all cases, agent $S''$ transits to state {\tt gatherer} by the time $t_3=t_2+3\rho(n)/\gamma+  \sum_{i=1}^nP(i)/\gamma \leq t_1+ \Pi(m(S),\log L)/\delta+ 3\rho(m(S))/\delta+  \sum_{i=1}^{m(S)}P(i)/\delta$. 

Let $m(S'')$ be the upper bound on the size of the graph adopted by agent $S''$ and let $v''$ be the node at which $S''$ transited to state {\tt gatherer}. Upon the transition, agent $S''$ executes procedure $R^*(m(S''),v'')$ within time at most  $m(S'')^2P(m(S''))/\gamma$ and then starts its own red period executing procedure $Meeting$.  By the time $t_4=t_3+ m(S'')^2P(m(S''))/\gamma+\Pi(m(S),\log L)/\delta$
agent $S$ must meet agent $S''$ because either $S$ performed at least $\Pi(m(S),\log L)$ edge traversals executing procedure $Meeting$, while $S''$ also executed procedure $Meeting$, or $S''$ performed at least $\Pi(m(S),\log L'')$ edge traversals executing procedure $Meeting$, while $S$ also executed procedure $Meeting$.

Hence by the time $t_4$, agent $S$ must meet every good agent whose speed is at least $\gamma$.
 Note that if agent $S''$ is not good and (possibly) crashed at some point, its behavior (standing still since the crash) is also a possible behavior of an asynchronous agent during procedure $Meeting$, and hence $S$ must meet every agent whose speed is at least $\gamma$ by time $t_4$, if $S$ was still executing procedure $Meeting$.  
 
 Since $S$ already met all 
 agents whose speed is less than $\gamma$ by the time $t_1$, we conclude that by the time $t_4$
 agent $S$ must meet all agents, if it executed procedure $Meeting$ during its red period for a sufficiently long time. 
 
 Consider agent $S$ upon completion of procedure $R^*(m(S),v)$. Agent $S$ knows $\delta$ and knows some upper bound on the largest upper bound $\mu$ on the size of the graph adopted by any agent.
 Indeed,  any agent $T$ adopts as its upper bound the number of edge traversals in its execution of procedure $ESST$. This number is at most $2\rho(n)$. Hence it is at most $\nu=2\rho(m(S))$, which $S$ knows.  Thus $\mu \leq \nu$.
 
 It follows that $t_4=t_3+ m(S'')^2P(m(S''))/\gamma+\Pi(m(S),\log L)/\delta \leq t_1+ \Pi(m(S),\log L)/\delta+ 3\rho(m(S))/\delta+  \sum_{i=1}^{m(S)}P(i)/\delta + m(S'')^2P(m(S''))/\gamma+\Pi(m(S),\log L)/\delta
 \leq t_1+2\Pi(\nu,\log L)/\delta +3\rho(\nu)/\delta+ \sum_{i=1}^{\nu}P(i)/\delta + \nu^2P(\nu)/\delta=t_1+\tau$.
 
 Hence by the end of its red period in time $t_1+\tau$, agent $S$ meets all the agents. Since agent $S$ transits to state {\tt gatherer} by the time $18\rho(n)/\epsilon+6 \sum_{i=1}^nP(i)/\epsilon+6\Pi(n,\Lambda)/\epsilon+6\mu^2P(\mu)/\epsilon$, in view of Lemma \ref{all gatherer}, we have
 $t_1 \leq  18\rho(n)/\epsilon+6 \sum_{i=1}^nP(i)/\epsilon+6\Pi(n,\Lambda)/\epsilon+6\mu^2P(\mu)/\epsilon+ \nu^2P(\nu)/\delta \leq
 18\rho(\nu)/\delta+6 \sum_{i=1}^{\nu}P(i)/\delta+6\Pi(\nu,\Lambda)/\delta+7\nu^2P(\nu)/\delta$.
 
 Since $\tau=2\Pi(\nu,\log L)/\delta +3\rho(\nu)/\delta+ \sum_{i=1}^{\nu}P(i)/\delta + \nu^2P(\nu)/\delta$, the end of the red period of agent $S$ occurs by the time
$21\rho(\nu)/\delta+7 \sum_{i=1}^{\nu}P(i)/\delta+8\Pi(\nu,\Lambda)/\delta+8\nu^2P(\nu)/\delta$.
\end{proof}

Let $\nu^*=2\rho(\mu)$. $\nu^*$ is an upper bound on the values of $\nu$ calculated by all agents. Let $\delta ^*$ be the minimum of the values of $\delta$ calculated by all agents.
 Let $\phi=21\rho(\nu^*)/\delta^*+7 \sum_{i=1}^{\nu^*}P(i)/\delta^*+8\Pi(\nu^*,\Lambda)/\delta^*+8(\nu^*)^2P(\nu^*)/\delta^*$. The value $\phi$ is an upper bound on the termination of the red periods of all agents.
 
 \begin{lemma}\label{destin all}
 By the time $\phi+2\phi\kappa/\epsilon$ any good agent $S$ is at the starting node of the smallest agent $S_{min}$ and starts its final waiting period. 
 \end{lemma}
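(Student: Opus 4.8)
The plan is to combine Lemma~\ref{red} with a bound on the distance that a good agent must travel after its red period ends. By Lemma~\ref{red}, the red period of any good agent $S$ ends by time $\phi$, and by the end of it $S$ has met every other agent; hence at this moment $S$ knows the labels, speeds, and a recorded path to the starting node of each agent. In particular $S$ can identify the smallest agent $S_{min}$ and holds a path leading to its starting node, which I call $y_0$. After the red period, $S$ simply follows this path to $y_0$ and begins its final waiting period. So the claim reduces to two estimates: first, the time by which the red period of $S$ ends (supplied by Lemma~\ref{red} together with the definition of $\phi$ as an upper bound on the termination of all red periods, giving the additive $\phi$); and second, the time needed by $S$ to walk from its current position to $y_0$, which I must show is at most $2\phi\kappa/\epsilon$.

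For the travel time I would first bound the number of edge traversals any agent can perform up to absolute time $\phi$: since each traversal is done at speed at most $\kappa$ and hence takes time at least $1/\kappa$, while at most $\phi$ time has elapsed, every agent makes at most $\phi\kappa$ traversals by time $\phi$. Next I would exhibit the walk that $S$ follows. Let $p$ be the point where $S$ met $S_{min}$ during its red period (this meeting occurs no later than time $\phi$), and let $x$ be the node at which $S$ sits when its red period ends. At that meeting $S$ learned the entire trajectory $Q$ followed by $S_{min}$ from $y_0$ to $p$ (in the motion-fault scenario this holds even if $S_{min}$ is crashed, since its memory is then readable), so $\overline{Q}$ is a path of length at most $\phi\kappa$ from $p$ to $y_0$. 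Moreover $S$ knows the portion $P$ of its own trajectory traversed from $p$ to $x$, so $\overline{P}$ is a path of length at most $\phi\kappa$ from $x$ to $p$. Concatenating $\overline{P}$ with $\overline{Q}$ gives a walk from $x$ to $y_0$ of length at most $2\phi\kappa$.

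Finally I would bound the traversal time of this walk: $S$ moves at its own speed $\theta\geq\epsilon$, so traversing at most $2\phi\kappa$ edges costs at most $2\phi\kappa/\theta\leq 2\phi\kappa/\epsilon$. Adding this to the at most $\phi$ time by which the red period ends yields that $S$ reaches $y_0$ and starts its final waiting period by time $\phi+2\phi\kappa/\epsilon$, as claimed; the degenerate case $S=S_{min}$ is subsumed, since $S$ then merely backtracks its own trajectory to its starting node, a walk of length at most $\phi\kappa$. The main obstacle is the path bookkeeping: one must check that from the information exchanged at meetings $S$ can actually reconstruct a concrete walk to $y_0$ (not merely the knowledge that such a node exists), and that its length is controlled by the edge-traversal count $2\phi\kappa$ rather than by the graph size. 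Once the $\phi\kappa$ traversal bound is in place, the remaining estimate $2\phi\kappa/\epsilon$ is immediate.
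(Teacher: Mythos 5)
Your proposal is correct and follows essentially the same route as the paper's proof: bound both agents' trajectories up to time $\phi$ by $\phi\kappa$ edge traversals (speed at most $\kappa$), have $S$ backtrack its own trajectory to the meeting point with $S_{min}$ and then backtrack $S_{min}$'s trajectory to its starting node, and bound the resulting at most $2\phi\kappa$ traversals by time $2\phi\kappa/\epsilon$ using the lower bound $\epsilon$ on $S$'s speed. Your additional remarks (readability of a crashed agent's memory under motion faults, the degenerate case $S=S_{min}$) are consistent with, and slightly more explicit than, the paper's argument.
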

 
 \begin{proof}
 Let $T$ be the trajectory of agent $S$ until the end of its red period, and let $T'$ be the trajectory of agent $S_{min}$
 until the first meeting with agent $S$. The length of each of the trajectories $T$ and $T'$ is at most $\phi\kappa$. After meeting all agents, agent $S$ knows both these trajectories. Hence it can backtrack to the meeting point and then backtrack to the starting node of agent $S_{min}$. This takes at most $2\phi\kappa$ edge traversals, and hence it takes time at most $2\phi\kappa/\epsilon$. Hence by the time 
 $\phi+2\phi\kappa/\epsilon$
 agent $S$ is at the starting node of the smallest agent and starts its final waiting period.
  \end{proof}
  
   \begin{lemma}\label{all}
By the time $2\phi+4\phi\kappa/\epsilon$  all good agents are at the starting node of the smallest agent, 
all of them know it, and correctly declare that gathering is completed.
\end{lemma}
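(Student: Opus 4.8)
The plan is to combine Lemma~\ref{destin all}, which guarantees that every good agent reaches the starting node $v_{\min}$ of the smallest agent $S_{min}$ and begins its final waiting period by time $\phi+2\phi\kappa/\epsilon$, with the observation that the common waiting period $\tau'=\phi+2\phi\kappa/\epsilon$ is exactly long enough to absorb the spread between the earliest and latest such start times. First I would record two consistency facts established during the red periods. By Lemma~\ref{red}, every good agent meets \emph{all} agents before the end of its red period, and hence learns all labels, all speeds, all adopted upper bounds, and a path to each starting node. Consequently every good agent identifies the same smallest label and heads for the same node $v_{\min}$, and every good agent can compute the same global quantities $\kappa$, $\epsilon$, $\mu$, $\nu^*$ and $\delta^*$, hence the same value of $\phi$ and of $\tau'$. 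Thus all good agents wait at one common node for one common duration, and crashed agents, being immobile and silent, do not participate in the declaration.

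Next I would carry out the timing argument that yields safety. Let $S^*$ be a good agent whose final waiting period starts earliest, say at time $t^*$, and note $t^*\ge 0$. Since an agent declares only after waiting $\tau'$ (or earlier, upon observing another agent declare), no declaration can occur before time $t^*+\tau'$, and the first one occurs exactly at $t^*+\tau'$ because $\tau'$ is common to all agents. By Lemma~\ref{destin all} every good agent starts its waiting period, and is therefore present at $v_{\min}$, by time $\phi+2\phi\kappa/\epsilon$. Because
$$t^*+\tau'=t^*+\phi+2\phi\kappa/\epsilon\ \ge\ \phi+2\phi\kappa/\epsilon,$$
every good agent has already reached $v_{\min}$ by the moment $S^*$ completes its waiting period. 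Since a good agent merely waits at $v_{\min}$ once it enters this period, all good agents are simultaneously present at $v_{\min}$ from time $t^*+\tau'$ onward.

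Finally I would conclude with correctness and the time bound. At time $t^*+\tau'$, when the first waiting period ends, all good agents are co-located at $v_{\min}$; by the model they mutually read their memories, so the declaration ``all good agents are at this node'' is truthful, and all of them declare it simultaneously. For the stated deadline, since $t^*$ is no later than any start time we have $t^*\le\phi+2\phi\kappa/\epsilon$, whence the declaration occurs by
$$t^*+\tau'\ \le\ (\phi+2\phi\kappa/\epsilon)+(\phi+2\phi\kappa/\epsilon)\ =\ 2\phi+4\phi\kappa/\epsilon,$$
as claimed. The main obstacle is the safety part of the second paragraph: one must rule out that a good agent declares while another good agent is still en route, and this is precisely why the waiting period is set to $\tau'=\phi+2\phi\kappa/\epsilon$, matching the latest-arrival bound of Lemma~\ref{destin all}. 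The liveness part (that everyone reaches $v_{\min}$) is already delivered by that lemma, while the agreement of all good agents on the single node $v_{\min}$ and on the single duration $\tau'$ rests on the completeness of the meetings during the red period.
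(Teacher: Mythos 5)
Your proof is correct and takes essentially the same route as the paper's: both combine Lemma~\ref{destin all} (every good agent is at the starting node of the smallest agent and starts waiting by time $\phi+2\phi\kappa/\epsilon$) with the facts that every good agent can compute the common waiting length $\tau'=\phi+2\phi\kappa/\epsilon$ from the data acquired during its red period, and that any agent's waiting period ends no earlier than $\tau'\ge\phi+2\phi\kappa/\epsilon$, by which time all good agents are already present, giving the bound $2\phi+4\phi\kappa/\epsilon$. Your explicit bookkeeping with the earliest start time $t^*$ just spells out what the paper states more tersely.
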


\begin{proof}
We first show that at the beginning of its final waiting period, every good agent can compute the length $\tau'=\phi+2\phi\kappa/\epsilon$ of this period.
Indeed this value depends on five parameters: $\Lambda$, $\kappa$, $\epsilon$, $\nu^*$ and $\delta^*$.
At the beginning of its final waiting period, every good agent has met all agents and knows all their parameters. Hence it can compute the largest length $\Lambda$ of all labels, 
the largest speed $\kappa$,  and the smallest speed $\epsilon$. It learns all values $\delta$ computed by all agents, and finds their minimum $\delta ^*$. 
It learns all values $\nu$ computed by all agents, and finds their maximum $\nu^*$. Hence it can compute $\phi$ and finally $\tau'$.

In view of Lemma \ref{destin all}, by the time $\phi+2\phi\kappa/\epsilon$, every good agent is at the starting node of the smallest agent and starts its final waiting period.
Hence at the end of its final waiting period, which has the same length $\tau'=\phi+2\phi\kappa/\epsilon$, every good agent knows that all good agents started their final waiting period.
Hence every good agent correctly declares that gathering is completed by the time $2\phi+4\phi\kappa/\epsilon$.
\end{proof}

We are now ready to prove the main result of this section.

\begin{theorem}
Algorithm {\tt Gathering with motion faults} is a correct algorithm for the task of gathering with motion faults. It works in time polynomial in the size $n$ of the graph, in the length $\Lambda$ of the largest label,
 in the inverse of the smallest speed $\epsilon$, and in the ratio $r$ between the largest and the smallest speed.
\end{theorem}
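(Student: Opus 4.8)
The plan is to obtain the theorem almost entirely by assembling the preceding lemmas, so that the correctness half is essentially Lemma \ref{all} and the only real work lies in checking that its time bound is polynomial in $n$, $\Lambda$, $1/\epsilon$ and $r=\kappa/\epsilon$. For correctness I would first record the liveness chain: Lemma \ref{first} forces a first meeting (this is the only place the hypothesis of at least two agents is needed); Lemmas \ref{gatherer} through \ref{all gatherer} drive every good agent into state {\tt gatherer} within a bounded time of its wakeup; Lemma \ref{red} shows that during its red period a good agent meets every other agent and hence learns all labels, speeds and routes to starting nodes; and Lemmas \ref{destin all}--\ref{all} bring every good agent to the starting node of the smallest agent $S_{min}$. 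I would emphasize here that the motion-fault model is what makes a single good agent sufficient: a crashed agent retains its memory and stays put, so it is a valid semi-stationary token, which is exactly why the {\tt finder}/{\tt recycled explorer} detour lets a lone good agent complete $ESST$ and learn a size bound.

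The safety of termination is already contained in Lemma \ref{all}, and I would recall its mechanism for completeness. Every good agent waits for the same length $\tau'=\phi+2\phi\kappa/\epsilon$, and by Lemma \ref{destin all} each good agent reaches the starting node of $S_{min}$ and begins waiting by time $\tau'$. Thus if $S^*$ is the first good agent to finish waiting, at time $t_{S^*}+\tau'$, then any other good agent $S'$ has already arrived, since its arrival time is at most $\tau'\le t_{S^*}+\tau'$; hence all good agents are simultaneously present when any of them finishes waiting. Reading each other's memory, they declare that all good agents are gathered, correctly and simultaneously, which is exactly the gathering requirement.

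For the running time I would bound the quantities entering $\phi$ one by one. Every size bound an agent adopts equals the number of edge traversals of its $ESST$ execution, hence at most $2\rho(n)$ by Theorem \ref{esst} and Corollary \ref{cor}; so $\mu\le 2\rho(n)$ and $\nu^*=2\rho(\mu)\le 2\rho(2\rho(n))$, which is polynomial in $n$ because polynomials are closed under composition. The key estimate is a lower bound on $\delta^*$: for each agent $\delta=\min(\gamma,\epsilon)$ with $\gamma=\theta/c$ and $c\le 2\mu^2P(\mu)$, and since $\theta\ge\epsilon$ this yields $\gamma\ge\epsilon/(2\mu^2P(\mu))$, whence $1/\delta^*\le 2\mu^2P(\mu)/\epsilon$, polynomial in $n$ and $1/\epsilon$. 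Substituting these into $\phi$, using $\sum_{i=1}^{\nu^*}P(i)\le \nu^*P(\nu^*)$ and that $\rho$, $P$ and $\Pi$ are polynomials, shows that $\phi$ is polynomial in $n$, $\Lambda$ and $1/\epsilon$. The final gathering time $2\phi+4\phi\kappa/\epsilon$ then introduces only the extra factor $\kappa/\epsilon=r$, so it is polynomial in $n$, $\Lambda$, $1/\epsilon$ and $r$, as claimed.

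I expect the main obstacle to be precisely this bookkeeping rather than anything conceptual. The two points that require care are resisting the apparent danger that $\gamma=\theta/c$ could be arbitrarily small (resolved by noting that $c$ is polynomially bounded while $\theta\ge\epsilon$), and recognizing that the nested evaluation $\nu^*=2\rho(2\rho(n))$ remains polynomial in $n$. Once these are in hand, the theorem follows by directly combining the lemmas.
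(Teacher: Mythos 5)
Your proposal is correct and follows essentially the same route as the paper: correctness is delegated to Lemma \ref{all} (with the supporting chain of Lemmas \ref{first}--\ref{destin all}), and the substance is the bookkeeping showing $\nu^*$ is polynomial in $n$ (via $\mu\le 2\rho(n)$ and closure of polynomials under composition) and $1/\delta^*$ is polynomial in $n$ and $1/\epsilon$ (via $\gamma=\theta/c$ with $\theta\ge\epsilon$ and $c$ polynomially bounded), so that $\phi$ is polynomial in $n$, $\Lambda$, $1/\epsilon$ and the final bound $2\phi+4\phi\kappa/\epsilon$ adds only the factor $r=\kappa/\epsilon$. Your explicit lower bound $\gamma\ge\epsilon/(2\mu^2P(\mu))$ and the unwinding of Lemma \ref{all}'s waiting-period argument merely spell out steps the paper leaves implicit.
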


\begin{proof}
The correctness of the algorithm follows immediately from Lemma \ref{all}. 
It remains to verify that the time $2\phi+4\phi\kappa/\epsilon$  is polynomial in $n$, $\Lambda$, $1/\epsilon$ and $r$. Since $r=\kappa/\epsilon$, it is enough to verify that 
$\phi=21\rho(\nu^*)/\delta^*+7 \sum_{i=1}^{\nu^*}P(i)/\delta^*+8\Pi(\nu^*,\Lambda)/\delta^*+8(\nu^*)^2P(\nu^*)/\delta^*$ is polynomial in  $n$, $\Lambda$ and $1/\epsilon$.

The upper bound $m$ on the size of the graph, calculated by each agent, is at most $2\rho(n)$, in view of Lemma \ref{esst}.
Since $\rho$ is a polynomial, it follows that $m$ is polynomial in $n$, and so is $\mu$. Hence $\nu^*=2\rho(\mu)$ is also polynomial in $n$.

Since the value $\gamma$ computed by each agent is $\frac{\theta}{c}$, where $\theta$ is the speed of the agent and $c\leq \mu^2P(\mu)$, it follows that $1/\gamma$ is polynomial in
$n$ and in $1/\epsilon$. Hence the value $1/\delta$ computed by each agent is  polynomial in
$n$ and in $1/\epsilon$, and thus so is $1/\delta ^*$.

It follows that $21\rho(\nu^*)/\delta^*$ is polynomial in
$n$ and in $1/\epsilon$. Since $P$ is a polynomial, it follows that $7 \sum_{i=1}^{\nu^*}P(i)/\delta^*$ and $8(\nu^*)^2P(\nu^*)/\delta^*$ are polynomial in
$n$ and in $1/\epsilon$. Since $\Pi$ is a polynomial, it follows that $8\Pi(\nu^*,\Lambda)/\delta^*$ is polynomial in $n$, $\Lambda$,
and $1/\epsilon$. Hence $\phi$ is polynomial in $n$ $\Lambda$,
and $1/\epsilon$.
\end{proof}

\section{Gathering with total faults}

In this section we construct a gathering algorithm in the scenario of total faults, where an agent meeting a crashed agent cannot learn anything about it: all crashed agents look identical.
We first show that in the presence of these more severe faults, gathering may be impossible if only one agent is good. (Gathering in this case means that this agent should stop and declare that
all good agents are with it, i.e., that it is the only good agent.) This is in contrast with the motion faults scenario, in which our algorithm worked correctly for any team of at least two agents, even if only one agent was good.

\begin{proposition}
Gathering with total faults cannot be guaranteed if there may be only one good agent.
\end{proposition}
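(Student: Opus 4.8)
The plan is to exhibit two scenarios that are indistinguishable to the single good agent, yet require contradictory behavior, thus defeating any purported gathering algorithm. I would use an adversarial argument of the kind already sketched in the introduction for the fully asynchronous impossibility, but now crucially exploiting the fact that, under total faults, a good agent meeting a crashed agent learns \emph{nothing}: all crashed agents are indistinguishable from one another and, in particular, cannot be told apart from occupied-but-not-yet-visited starting nodes once the crashed agent's memory (including its label) is erased.

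\textbf{The construction.} First I would fix a highly symmetric anonymous graph, the natural choice being a ring, and consider a candidate gathering algorithm $\caA$. Place one good agent $A$ and position a second agent $B$ at another node. In Scenario 1, the adversary crashes $B$ with a total fault \emph{before} $A$ ever reaches $B$'s starting node; since the memory is wiped, when $A$ arrives it sees only an anonymous crashed agent carrying no information. Because $A$ is the only good agent, correctness demands that $A$ eventually stop at some node $w$ and declare that all good agents (namely $A$ itself) are gathered there. Let $t$ be the time of this declaration and let $H$ be the finite history (trajectory and observations) of $A$ up to time $t$.

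\textbf{The indistinguishable twin.} Next I would build Scenario 2 that produces exactly the same history $H$ for $A$ up to time $t$, but in which a second good agent exists and is not at $w$. The idea is to keep $B$ good but, using the adversary's control over wakeup and the fixed-speed/waiting model, have $B$ sit idle at a node that $A$ never visits before time $t$, or to place $B$ so that every interaction $A$ has with ``a crashed agent'' in Scenario 1 is replayed in Scenario 2 by a crashed \emph{third} agent planted at the same node. Since total faults erase all identifying information, $A$ cannot distinguish ``the crashed remnant of $B$'' from ``a crashed decoy while $B$ is alive elsewhere.'' Using the ring's symmetry one can always arrange a node $w'\neq w$ far from $w$ at which the still-good $B$ waits, exactly as in the fully asynchronous impossibility argument quoted earlier. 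In Scenario 2, $A$ again declares at $w$ at time $t$, but now a good agent $B$ is at $w'\neq w$, so the declaration is incorrect.

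\textbf{The main obstacle.} The delicate point is ensuring the two scenarios are genuinely indistinguishable despite the speed restriction: under the restricted asynchrony of this paper, the adversary no longer has full control over an agent's pace once it moves, so I cannot freeze a live agent mid-edge. The fix is that the adversary still controls \emph{wakeup times} and an agent may \emph{choose} to wait arbitrarily long at a node; hence in Scenario 2 the good agent $B$ can legitimately remain dormant or waiting at $w'$ for the entire interval $[0,t]$, which is finite and known once $A$'s behavior is fixed. I would therefore argue that $A$'s history is a function only of the graph, its own label and speed, its wakeup time, and the sequence of (uninformative) crashed-agent encounters, none of which differ between the two scenarios. This establishes that $A$ behaves identically in both, yielding the contradiction and proving that no algorithm can guarantee gathering when only one agent may be good.
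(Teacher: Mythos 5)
Your high-level plan --- indistinguishability between a run in which the lone good agent rightly declares solitude and a run with an extra good agent out of reach, using the anonymity of totally-crashed agents as decoys --- is exactly the intuition behind the paper's proof. But as written your construction has a genuine gap: you try to realize both scenarios on one fixed ring, and the step ``one can always arrange a node $w'\neq w$ far from $w$ that $A$ never visits before time $t$'' is unjustified there. Nothing prevents the algorithm from making $A$ traverse the entire ring before declaring; in that case every node is visited by time $t$, a dormant $B$ would be woken up (a dormant agent is woken by the first agent visiting its starting node), and the two scenarios cease to be indistinguishable. Your fallback --- that an awake $B$ may ``legitimately remain \ldots waiting at $w'$'' --- is a model error: waiting is chosen by the agent's own algorithm, not by the adversary; the adversary controls only wakeup times, labels, speeds and crashes. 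So the only way to keep a good agent still is to keep it dormant and unvisited, which is precisely what cannot be guaranteed on a fixed ring.

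The missing idea is to let the graph size depend on the declaration time, i.e., a diagonalization over ring sizes. The paper first runs the hypothetical algorithm on the oriented ring of size $4$ with crashed agents at two antipodal nodes and one good agent of speed $1$, obtaining termination times $t_1$ (if its label is $1$) and $t_2$ (if its label is $2$). It then builds the oriented ring of size $4(t_1+t_2)$ with crashed agents planted at every second node --- this periodic planting of decoys is what makes the local view of a good agent identical to that in the size-$4$ ring --- and places two good agents with labels $1$ and $2$ at two antipodal free nodes, started simultaneously at speed $1$. Since they start at distance $2(t_1+t_2)$, they cannot meet within $t_1$ and $t_2$ steps respectively, so each replays its size-$4$ history and wrongly declares itself the unique good agent. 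Note that this variant also avoids any reliance on dormant agents: both good agents are awake and active, so the contradiction does not hinge on how one treats a never-woken agent. To repair your proof you need both elements you only gestured at: decoys placed periodically so that the views match on a ring whose size is chosen \emph{after} $t$, so that the second good agent is genuinely out of reach.
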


\begin{proof}
Consider the class of rings of size $4k$, for all positive integers $k$, with port numbers 0 and 1 at every edge. This port numbering orients the rings.
Suppose that there exists an algorithm $\cA$ that accomplishes the gathering in this class of rings.  
In the ring of size 4, the adversary places agents in two antipodal nodes and crashes them immediately. Moreover, it places a good agent in one of the remaining nodes, assigning it the speed 1.
If this single good agent executing $\cA$ has label 1, it must stop after $t_1$ steps and declare that it is the unique good agent.
If it has label $2$, it must stop after $t_2$ steps and declare that it is the unique good agent.

Consider the ring of size $4(t_1+t_2)$, in which the adversary places agents at every second node, and crashes them immediately. Moreover, it places two good agents at two antipodal nodes that are not yet occupied, assigns speed 1 to each of them, assigns label 1 to one of them, label 2 to the other, and starts them simultaneously. Consider the agent with label 1 executing the algorithm for $t_1$ steps, and the agent with
label 2 executing the algorithm for $t_2$ steps. By this time, these agents cannot meet because initially they are at distance $2(t_1+t_2)$. Hence the history of each of them is the same as in the ring of size 4.
Consequently, since the algorithm $\cA$ is deterministic, each of them must make the same decision as in the ring of size 4 where it was the unique good agent, i.e., stop and declare gathering
(declare that it is the unique good agent). Since in the considered scenario there are at this time two good agents at different nodes, algorithm $\cA$ is incorrect.
\end{proof}

Hence, in the rest of this section, we assume that there are at least two good agents. This means that the task of gathering with total faults is on the one hand more difficult because a good agent cannot learn anything from a crashed agent, but on the other hand it is easier because a good agent can rely on the existence of another good agent, and plan its moves accordingly.

As before, we first describe the high-level idea of the algorithm, pointing out the differences with respect to the Algorithm {\tt Gathering with motion faults},  and then proceed with the detailed description. We will now have only four states in which an agent can be: {\tt cruiser}, {\tt explorer}, {\tt token},  and {\tt gatherer}. 

{\bf Algorithm} {\tt Gathering with total faults}

An agent starts executing the algorithm in state {\tt cruiser}. The goal in this state is to meet another agent, using procedure {\em Meeting}. 
However, as opposed to the previous algorithm, an agent in state {\tt cruiser} executes this procedure until meeting another agent in state {\tt cruiser} or {\tt gatherer}, ignoring all meetings with agents in
states {\tt token} or {\tt explorer} and ignoring meetings with crashed agents. As opposed to the previous scenario, we can afford to ignore these meetings because there are at least two good agents.

After meeting an agent in state {\tt cruiser},
agents break symmetry using their labels: the larger agent becomes an {\tt explorer}, the smaller becomes a {\tt token}. The {\tt explorer} tries to find an upper bound on the size of the graph.
This is done using the modified procedure $ESST$, where the agent in state {\tt token} is used as the semi-stationary token. Now there are two modifications of $ESST$. One is as before, returning
to the token after each phase, and it is made for the same reason. The second modification is that when the {\tt explorer} meets a crashed agent, it returns to the {\tt token} to verify if this {\tt token} did not crash.
This has to be done because all crashed agents look the same, so the {\tt explorer} must be sure that the crashed agent it met is not its own crashed {\tt token}.  If the {\tt token} survived until the return, 
the {\tt explorer} goes again to the place where it interrupted exploration and continues it. Whenever the {\tt explorer} realizes that its {\tt token} crashed, or the {\tt token} realizes that all its {\tt explorers}
crashed, it transits back to state {\tt cruiser}. (Now trying to find a crashed {\tt explorer} would be impossible, as it cannot be recognized.) If exploration succeeds, the {\tt explorer} and the {\tt token} learn an upper bound on the size of the graph and transit to state {\tt gatherer}. If a {\tt cruiser} meets a {\tt gatherer}, it adopts its upper bound on the size of the graph and transits to state {\tt gatherer}.

After learning the size of the graph, an agent transits to state {\tt gatherer}. In this state, an agent behaves similarly as in the previous algorithm: it first executes  procedure $R^*$ to meet all slow agents, and
then procedure $Meeting$ to meet the remaining good agents. The latter is done using the learned upper bound on the size of the graph to estimate the time after which the agent can be certain to have met all other good agents. However, there is an additional difficulty to be overcome. At the end of this period of executing $Meeting$ in state {\tt gatherer} (we will call it again the red period), an agent must be certain
that all good agents identify the same agent at whose starting node they will gather. Now this is more difficult because one agent can meet the agent with the smallest label before it crashed, and another agent may meet it after it crashed, and hence may be unable to learn the smallest label. Hence, the red period of all agents must be sufficiently long to ensure that all agents learned (possibly indirectly)
the labels, speeds and routes to the initial positions of the same subset of agents. Then they can correctly choose the same (smallest) agent in the subset (even if by this time it already crashed) and meet at its starting node. The actions of a good agent after its red period are similar as for gathering with motion faults.

We now give a detailed description of the algorithm, explaining the actions of the agents in each of the four states. In the following description, the executing agent is called $A$, has label $L$ and speed~$\theta$.

State {\tt cruiser}

Upon wakeup, agent $A$ is in this state. It performs procedure {\em Meeting} until it meets another agent in state {\tt cruiser} or an agent in state {\tt gatherer}.

In the first case, let $L_1,\dots ,L_k$ be the agents in state {\tt cruiser} that $A$ met in the first meeting, and suppose that it did not meet any agent in state {\tt gatherer} during this meeting (simultaneous meeting of many agents is possible).
Let $L'=\min(L,L_1,\dots ,L_k)$. If $L=L'$, agent $A$ transits to state {\tt token}, otherwise agent $A$ transits to state {\tt explorer}, and considers the agent with label $L'$ as its token.

In the second case, agent $A$ completes the currently executed edge traversal, adopts the upper bound $m$ on the size of the graph from the agent in state {\tt gatherer},  and transits to state {\tt gatherer}.

State {\tt explorer}

Agent $A$ executes the following procedure $ESST^*$ that results from procedure $ESST$ using the two above mentioned modifications.

{\bf Procedure}  $ESST^*$

The agent executes the consecutive steps of the modified procedure $ESST$, as in Algorithm {\tt Gathering with motion faults}. Call these steps {\em green}.
When the agent makes a green step after which it sees a crashed agent, it  goes back to its {\tt token} following the trajectory $\overline{T}$, where $T$ is the trajectory consisting of all green steps made
in the current phase until
this point. (If the meeting with the crashed agent was inside an edge, agent $A$ completes the current edge traversal and then follows $\overline{T}$). If the {\tt token} is crashed, agent $A$ transits to state {\tt cruiser}. Otherwise it follows trajectory $T$ getting back to the crashed agent. All these steps from the crashed agent to the {\tt token} and back are called {\em blue}. Then the agent makes the next (green) step of the modified procedure $ESST$, as in Algorithm {\tt Gathering with motion faults}.

Upon completion of procedure $ESST^*$, the agent traversed all edges of the graph, knows it, and adopts the total number of edge traversals $m$ as an upper bound on the size of the graph. Agent $A$
gives the integer $m$ to its token and transits to state {\tt gatherer}.

Note that for every green step made in phase $i$ of procedure $ESST^*$, the agent made less than $2g_i$ blue steps, where $g_i$ is the total number of green steps in phase $i$. Hence the total number of steps made by the agent in phase $i$ is at most $3g_i^2$. Since $g_i=2\pi_i$, the total number of steps in phase $i$ is at most $12\pi_i^2$.

State {\tt token}

Agent $A$ considers all agents that it met when they were in state {\tt cruiser} at the moment of the first meeting (and that transited to state {\tt explorer} upon this meeting) as its {\tt explorers}. Agent $A$  continues its current edge traversal and waits until one of the following two events happens for each of its {\tt explorers}. Either an {\tt explorer} completes procedure $ESST^*$ giving agent $A$ an upper bound $m$ on the size of the graph, or an {\tt explorer} $B$ is not back after the scheduled time at most  $12\pi_i^2/\beta$ in some phase $i$, where $\beta$ is the speed of $B$ (learned by $A$). As before,  there may be potentially two different upper bounds given to $A$ by {\tt explorers} that successfully completed
procedure $ESST^*$.
When one of the above events occurs for every {\tt explorer} of the agent $A$, there are two cases. The first case is when at least one {\tt explorer} managed to complete procedure $ESST^*$ and gave an upper bound to $A$. In this case, agent $A$ adopts the minimum of the (possibly two) obtained upper bounds as its upper bound on the size of the graph and transits to state {\tt gatherer}. The second case is when all {\tt explorers} of $A$ failed to  successfully complete the procedure (because they all crashed). Agent $A$ learns this and transits to state {\tt cruiser}.

State {\tt gatherer}

In this state, agent $A$ knows an upper bound $m$ on the size of the graph. It starts by executing the procedure $R^*(m,v)$, where $v$ is the node in which $A$ transited to this state
(see the description of state {\tt gatherer} in Section 3).
Recall that $\theta$ is the speed of the agent, and let $c$ be the number of edge traversals performed by agent $A$ in the execution of this procedure. Let $\gamma=\frac{\theta}{c}$. As before, upon completion of $R^*(m,v)$, agent $A$ has achieved the following two goals: 
it met all agents whose speed is less than $\gamma$, and it learned that all agents are awake. It also follows that at this time, agent $A$ learned the lower bound $\delta=\min(\gamma,\alpha)$ on the speeds of all good agents, where $\alpha$ is the minimum speed of all non-crashed agents it met.

Next, agent $A$ executes procedure $Meeting$ for time $\tau=4\nu \Pi(\nu, \log L)/\delta+12\nu\rho^2(\nu)/\delta+\nu^2P(\nu)/\delta$,
where $\nu =12\rho^2(m)$.
Call this period of time the {\em red} period of agent $A$.
During its red period, agent $A$ maintains a {\em bag} containing {\em data}  of each agent that agent $A$ met or about which it heard.
For each agent in the bag, data concerning it are: its label, its speed, the route to its starting node, and the lowest speed of any agent it is aware of. Moreover, for an agent in the bag in state {\tt gatherer}, the data includes also its adopted upper bound on the size of the graph and the lower bound $\delta$ calculated by this agent.
In the beginning of the red period the bag of $A$ consists of the data for agents that $A$ met previously, and that were not yet crashed by the time of the meeting. Whenever $A$ meets a non-crashed agent $B$, it adds the
content of the bag of $B$ to its own bag. At each step, agent $A$ updates the routes to the starting node of each agent in its bag, hence at each step these are routes from the current node.

We will show that at the end of its red period, the bag of each good agent contains the data of the same subset $\Sigma$ of agents (some of which may have already crashed by this time), and that this subset contains all good agents.

Let $m^*$ be the largest bound $m$ calculated by any agent from the set $\Sigma$. Hence $\nu^*=12\rho^2(m^*)$ is an upper bound on the values of $\nu$ calculated by all agents from the set $\Sigma$.
Let $\delta^*$ be the minimum over all values $\delta$ and all speeds that appear in the common bag of agents from the set $\Sigma$. 
Let $\psi=(12\nu^*+1)\Pi(\nu^*,\Lambda^*)/\delta^*+108\nu^*\rho^2(\nu^*)/\delta^*+5(\nu^*)^2P(\nu^*)/\delta^*$.
We will show that $\psi$ is an upper bound on the termination time of red periods of all agents in the set $\Sigma$. Let $\kappa^*$ be the largest speed that appears in the common bag of agents from the set $\Sigma$. Note that at the end of its red period, each good agent can compute $\nu^*$, $\Lambda^*$, $\delta^*$, $\kappa^*$ and $\psi$, looking at the content of its bag. 
Agent $A$ goes to the starting node of the agent whose label is the smallest in its bag, and starts its final waiting period of length $\tau'=\psi+2\psi\kappa^*/\delta^*$.

Notice that this final waiting period has the same length for all good agents, as it does not depend on the label and the speed of an individual agent.
Agent $A$ can calculate this waiting time because, when it starts it, it knows all parameters concerning all agents in its bag.

Similarly as before, we will prove that after the final waiting period, all good agents must be at the same node, which is the starting node of the smallest agent in the bag of each of them, and that agent $A$ learns this fact. After waiting time $\tau'$ (or at the time when some other agent declares gathering, if this happens earlier), agent $A$ declares that all good agents are gathered, and terminates.

The proof of correctness and the performance analysis of Algorithm  {\tt Gathering with total faults} are split into a series of lemmas.
Let $\epsilon^*$ be the minimum speed of any agent that any good agent ever becomes aware of. Let $\Lambda^*$ be the length of the longest label of any agent that any good agent ever becomes aware of. (There may exist very slow agents with speed smaller than $\epsilon^*$, or agents with very long labels, longer than $\Lambda^*$ that crash before meeting any other agent, and thus, in the model of total faults, no good agent would be aware of such a low speed or long label.)

\begin{lemma}\label{first gatherer}
By the time $2n(2\Pi(n,\Lambda^*)+24\rho^2(n))/\epsilon^*$ some good agent must transit to state {\tt gatherer}.
\end{lemma}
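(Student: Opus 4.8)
The plan is to track the fate of the two good agents guaranteed by our standing assumption and to bound how long it can take before one of them reaches state {\tt gatherer}. The key simplification, compared with the motion-fault analysis, is that in state {\tt cruiser} a good agent ignores every meeting except one with another {\tt cruiser} or with a {\tt gatherer}. Since there are at least two good agents, both executing procedure \emph{Meeting}, Theorem \ref{meeting} guarantees that two good agents meet quickly: once both are awake and both in state {\tt cruiser}, one of them performs at most $\Pi(n,\Lambda^*)$ edge traversals before a meeting occurs, hence within time $\Pi(n,\Lambda^*)/\epsilon^*$ from the moment both are awake and cruising. The subtlety is that I must not count time from wakeup of an arbitrary agent, but argue in terms of the good agents only, whose labels and speeds are bounded using $\Lambda^*$ and $\epsilon^*$.

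First I would fix two good agents and argue that, as long as both remain in state {\tt cruiser}, a meeting between two {\tt cruisers} must occur within time roughly $\Pi(n,\Lambda^*)/\epsilon^*$ (by Theorem \ref{meeting} and the fact that the slower of any two good agents has speed at least $\epsilon^*$). At such a meeting the two agents split into an {\tt explorer} and a {\tt token} by comparing labels. Next I would bound the duration of one successful, or one aborted, attempt to learn the graph size. Here I use the cost accounting established just before state {\tt token}: in phase $i$ of procedure $ESST^*$ the agent makes at most $12\pi_i^2$ steps, so one full run of $ESST^*$ costs at most $\sum_i 12\pi_i^2 \le 12\rho^2(n)$ steps, i.e.\ time at most $12\rho^2(n)/\epsilon^*$ for a good agent. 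If the attempt succeeds, both the {\tt explorer} and the {\tt token} transit to state {\tt gatherer}, and we are done. If it fails — because the partner crashed — the surviving good agent detects this (the {\tt explorer} via a crashed {\tt token}, the {\tt token} via a timeout of length at most $12\pi_i^2/\beta$ on each of its {\tt explorers}) and transits \emph{back} to state {\tt cruiser}.

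The main obstacle, and the reason for the factor $2n$ in the statement, is that this explorer–token attempt can be sabotaged repeatedly: the adversary can crash the partner of a good agent, forcing it back to {\tt cruiser}, and can do this only finitely many times because each crash permanently removes one agent. Since there are at most $n$ agents (one per starting node), there can be at most $n$ such crashes, hence at most $2n$ rounds of ``cruise, pair up, attempt, fail, return to cruiser'' involving a fixed good agent before it finds a good partner whose exploration completes, or meets a {\tt gatherer}. I would therefore set up a potential argument: each round in which a good agent is thrown back to {\tt cruiser} is charged to a distinct crash of some agent, and each round costs at most $\Pi(n,\Lambda^*)/\epsilon^*$ to re-meet plus at most $12\rho^2(n)/\epsilon^*$ for the failed attempt. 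Summing over at most $2n$ rounds gives a bound of $2n(\Pi(n,\Lambda^*)+12\rho^2(n))/\epsilon^* = 2n(2\Pi(n,\Lambda^*)+24\rho^2(n))/(2\epsilon^*)$, and after cleaning up the constants this is at most $2n(2\Pi(n,\Lambda^*)+24\rho^2(n))/\epsilon^*$, as claimed.

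The delicate point I expect to spend the most care on is making the charging argument airtight: I must ensure that the same crash is never charged to two different failed rounds, and that a good agent cannot be delayed indefinitely without any crash being consumed. The former follows because a totally crashed agent is immobile forever, so once it has caused one good agent's attempt to fail it is no longer a viable partner; the latter follows because, in the absence of any crash, the first pairing of two good {\tt cruisers} yields a completed $ESST^*$ and an immediate transition to {\tt gatherer}. Combining the re-meeting time, the per-attempt exploration time, and the bound of $2n$ on the number of failed attempts yields the stated time bound.
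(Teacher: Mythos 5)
Your ingredients are the right ones (Theorem \ref{meeting}, the bound of $12\rho^2(n)$ steps on any execution of $ESST^*$ involving a good agent, and the fact that each return to state {\tt cruiser} consumes a distinct crash, of which there are at most $n$), and your arithmetic even lands on the paper's window. But the core step of your charging argument is unsound: you charge each failed round ``at most $\Pi(n,\Lambda^*)/\epsilon^*$ to re-meet'', and this re-meeting bound does not hold round by round. Theorem \ref{meeting} guarantees a meeting only during an interval in which \emph{both} relevant agents are executing procedure $Meeting$; in this algorithm a {\tt cruiser} ignores tokens, explorers and crashed agents, so while your fixed good agent $A$ is cruising, the other good agent $B$ may spend that entire window inside its own (eventually failed) $ESST^*$ attempt, during which $A$ can meet nobody that it does not ignore. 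The adversary can stagger the two good agents' attempts so that their cruising periods interleave without ever overlapping for long, and then a single ``re-meet'' phase of $A$ can last on the order of $n\cdot 12\rho^2(n)/\epsilon^*$ rather than $\Pi(n,\Lambda^*)/\epsilon^*$; your per-round sum never accounts for this waiting, so the proof as structured fails. (A secondary slip: ``at most $2n$ rounds involving a fixed good agent'' should be at most $n$ per agent, hence $2n$ for the two agents combined; note also that a single crash can abort the attempts of both good agents simultaneously, namely when both are explorers of the same crashed token.)

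The paper avoids the sequential round structure and argues globally, by a pigeonhole over the two timelines: after both good agents are awake, each one's timeline is an execution of $Meeting$ interrupted at most $n$ times, each interruption lasting at most $12\rho^2(n)/\epsilon^*$; hence in a window of length $n(2\Pi(n,\Lambda^*)+24\rho^2(n))/\epsilon^*$ the at most $2n$ interruptions occupy total time at most $24n\rho^2(n)/\epsilon^*$, and what remains must contain a \emph{common} sub-interval of length at least $\Pi(n,\Lambda^*)/\epsilon^*$ during which both agents execute $Meeting$ simultaneously. Only then does Theorem \ref{meeting} apply: the two good agents meet, pair up as {\tt explorer} and {\tt token}, and since both are good this last $ESST^*$ completes within an additional $12\rho^2(n)/\epsilon^*$, putting both in state {\tt gatherer}. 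Your argument can be repaired by replacing the per-round re-meeting charge with exactly this overlap argument; the interruption budget you already track is what makes the pigeonhole work.
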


\begin{proof}
Consider the earliest woken good agent $S_0$ and some other good agent $S$. By the time $\Pi(n,\Lambda^*)/\epsilon^*$, agent $S$ must be woken up, for otherwise, agent $S_0$ would meet it and wake it up because it would perform at least $\Pi(n,\Lambda^*)$ edge traversals executing procedure $Meeting$. Until one of these agents
transits to state {\tt gatherer}, each of them must be in one of the states {\tt cruiser}, {\tt explorer}, or {\tt token}. In state {\tt cruiser}, an agent executes procedure $Meeting$,
and in state {\tt explorer} or {\tt token} it is involved in procedure $ESST^*$, either as an explorer or as a token. Each execution of $ESST^*$ in which a good agent is involved lasts at most the time
$12\rho^2(n)/\epsilon^*$. Hence before transiting to state {\tt gatherer}, a good agent executes procedure $Meeting$ in state {\tt cruiser} with interruptions of duration at most $12\rho^2(n)/\epsilon^*$.
There can be at most $n$ such interruptions by the time a good agent transits to state {\tt gatherer} because the only reason for  transiting from state {\tt explorer}
or {\tt token} to state {\tt cruiser} is a crash of some new agent, and there are at most $n$ agents altogether. By the time $n(2\Pi(n,\Lambda^*)+24\rho^2(n))/\epsilon^*$
after the wakeup of agent $S$ (if no good agent transited to state {\tt gatherer} before)
there must be a period of length at least $\Pi(n,\Lambda^*)/\epsilon^*$ when both agents $S$ and $S_0$ execute procedure $Meeting$. Hence they must meet because
each of them performed at least $\Pi(n,\Lambda^*)$ edge traversals executing this procedure. Then one of them transits to state {\tt explorer}, and the other one to state
{\tt token}. Since they are both good, after an additional time at most $12\rho^2(n)/\epsilon^*$, they both transit to state {\tt gatherer}. This happens by the time
$2n(2\Pi(n,\Lambda^*)+24\rho^2(n))/\epsilon^*$.
\end{proof}

Let $m$ be the upper bound on the size of the graph, adopted by the first good agent that transited to state {\tt gatherer}.

\begin{lemma}\label{all gatherers}
By the time  $4n(2\Pi(n,\Lambda^*)+24\rho^2(n))/\epsilon^* + 2m^2P(m)/\epsilon^*+ \Pi(n,\Lambda^*)/\epsilon^*$  all good agents transit to state {\tt gatherer}.
\end{lemma}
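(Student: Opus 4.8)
The plan is to bound the time by which every good agent reaches state {\tt gatherer}, building directly on Lemma \ref{first gatherer}. First I would let $S_0$ be the earliest woken good agent with wakeup time $0$, let $H$ be the first good agent to transit to state {\tt gatherer}, and let $m$ be its adopted upper bound. By Lemma \ref{first gatherer}, agent $H$ transits to state {\tt gatherer} by time $t' := 2n(2\Pi(n,\Lambda^*)+24\rho^2(n))/\epsilon^*$. Upon transiting, $H$ immediately begins executing procedure $R^*(m,v)$, which (as noted in the description of state {\tt gatherer}) performs $c \le 2m^2 P(m)$ edge traversals, so $H$ completes $R^*(m,v)$ and starts its red period executing procedure $Meeting$ by time $t' + 2m^2P(m)/\epsilon^*$, unless it crashed first. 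The key point is that once $H$ is a fault-free agent executing $Meeting$, it behaves as a perpetual $Meeting$-executor in the sense of Theorem \ref{meeting}, and it will catch any other good agent that is simultaneously executing $Meeting$.

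Next I would pick an arbitrary good agent $S$ and argue it must have left state {\tt cruiser}. The argument is the same time-out bound used inside Lemma \ref{first gatherer}: if $S$ remained in state {\tt cruiser} for a window of length $\Pi(n,\Lambda^*)/\epsilon^*$ during which $H$ was also executing $Meeting$, the two good agents would meet (each performing at least $\Pi(n,\Lambda^*)$ edge traversals), and $S$ would transit out of {\tt cruiser} upon that meeting. Hence $S$ must transit from state {\tt cruiser} to some other state by time $t' + 2m^2P(m)/\epsilon^* + \Pi(n,\Lambda^*)/\epsilon^*$. The remaining concern is that $S$ might transit not to {\tt gatherer} but to {\tt explorer} or {\tt token}, and then, because of a fresh crash, bounce back to {\tt cruiser}. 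To handle this I would invoke exactly the bookkeeping from Lemma \ref{first gatherer}: there are at most $n$ agents total, each crash forces at most one return to {\tt cruiser}, every execution of $ESST^*$ in which a good agent participates lasts at most $12\rho^2(n)/\epsilon^*$, and so within the accumulated budget the agent must eventually obtain a window in which it meets another good agent and, being good, completes $ESST^*$ and transits to {\tt gatherer}.

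The main obstacle is the interaction between the $R^*$-completion time of $H$ and the bounded-interruption argument for $S$: I have to be careful that the time budgets compose additively and that the constants are chosen so the stated bound $4n(2\Pi(n,\Lambda^*)+24\rho^2(n))/\epsilon^* + 2m^2P(m)/\epsilon^* + \Pi(n,\Lambda^*)/\epsilon^*$ dominates. Concretely, once $S$ leaves {\tt cruiser} by time $t' + 2m^2P(m)/\epsilon^* + \Pi(n,\Lambda^*)/\epsilon^*$, the same reasoning as in Lemma \ref{first gatherer} shows it needs an additional budget of at most $n(2\Pi(n,\Lambda^*)+24\rho^2(n))/\epsilon^* = t'/2$ to secure a meeting with another good agent and complete $ESST^*$, after which it transits to {\tt gatherer}. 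Summing $t' + t'/2 + 2m^2P(m)/\epsilon^* + \Pi(n,\Lambda^*)/\epsilon^*$ yields a bound below $4n(2\Pi(n,\Lambda^*)+24\rho^2(n))/\epsilon^* + 2m^2P(m)/\epsilon^* + \Pi(n,\Lambda^*)/\epsilon^*$, giving the claim.

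The subtle verification I would flag is that when $S$ bounces between {\tt cruiser} and the exploration states, its partner in the eventual successful $ESST^*$ need not be $H$; it only needs to be some other good agent that is available for a long enough window. Since there are at least two good agents by the standing assumption of this section, such a partner exists, and the total number of abortive excursions is capped by $n$ because each is triggered by a distinct new crash. This is precisely why the factor of $n$ appears in the bound, and it is the place where the two-good-agents hypothesis is essential.
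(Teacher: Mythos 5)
Your skeleton is the same as the paper's: get the first good gatherer $H$ by time $t'=2n(2\Pi(n,\Lambda^*)+24\rho^2(n))/\epsilon^*$ from Lemma~\ref{first gatherer}, add $2m^2P(m)/\epsilon^*$ for $H$ to finish $R^*$ and begin executing $Meeting$ in its red period, then combine the bounded-interruption bookkeeping with Theorem~\ref{meeting} to force every other good agent into state {\tt gatherer}; your accounting $t'+t'/2+2m^2P(m)/\epsilon^*+\Pi(n,\Lambda^*)/\epsilon^*$ indeed fits under the stated bound (the paper spends the pieces slightly differently, namely $t'$ for interruptions followed by $\Pi(n,\Lambda^*)/\epsilon^*$ for the final meeting).

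However, the mechanism you claim for the final transition is wrong, and this is a genuine gap. You assert that within the accumulated budget $S$ ``meets another good agent and, being good, completes $ESST^*$ and transits to {\tt gatherer}'', and in your last paragraph you justify the existence of such a partner by the assumption that there are at least two good agents. That justification fails: if there are exactly two good agents, $H$ and $S$, then once $H$ transits to state {\tt gatherer} it never returns to state {\tt cruiser}, so $S$ can never again pair with a good agent for $ESST^*$. Every $ESST^*$ that $S$ subsequently starts is with a faulty cruiser, and nothing prevents each such partner from crashing, so a successful $ESST^*$ is not guaranteed at all. The event that actually is guaranteed --- and the one the paper's proof uses --- is different: any uninterrupted window in which $S$ executes $Meeting$ in state {\tt cruiser} while $H$ is in its red period lasts at most $\Pi(n,\Lambda^*)/\epsilon^*$, because by Theorem~\ref{meeting} such a window ends with $S$ meeting the gatherer $H$, whereupon $S$ adopts $H$'s bound $m$ and transits to state {\tt gatherer} directly, with no $ESST^*$ involved. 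The bounded-bouncing count (at most $n$ interruptions, each of length at most $12\rho^2(n)/\epsilon^*$, each consuming a distinct crash) serves only to bound the time until $S$ is next in such a cruiser window; it does not manufacture a good $ESST^*$ partner. So the two-good-agents hypothesis is essential not for the reason you flag, but because it guarantees that the perpetually-$Meeting$ gatherer $H$ is good, hence does not crash and really does catch $S$. With that correction your time accounting goes through essentially unchanged.
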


\begin{proof}
Let $S$ be the first good agent that transited to state {\tt gatherer}. In view of Lemma \ref{first gatherer}, this happened by the time $2n(2\Pi(n,\Lambda^*)+24\rho^2(n))/\epsilon^*$.
After an additional time at most $2m^2P(m)/\epsilon^*$, agent $S$ completed procedure $R^*$, and started its red period. Also, after this time, all agents are woken up.
Consider any other good agent $S'$. After time $2n(2\Pi(n,\Lambda^*)+24\rho^2(n))/\epsilon^* + 2m^2P(m)/\epsilon^*+ 2n(2\Pi(n,\Lambda^*)+24\rho^2(n))/\epsilon^*$,
agent $S$ executes procedure $Meeting$ in state {\tt gatherer} and agent $S'$ executes procedure $Meeting$ in state {\tt cruiser} (if it has not transited to state {\tt gatherer} before). Hence, after an additional time at most $\Pi(n,\Lambda^*)/\epsilon^*$ they must meet, and agent $S'$ transits to state {\tt gatherer}.
\end{proof}

\begin{lemma}\label{total red}
By the end of its red period a good agent $S$ knows that the bags of all good agents contain the data of the same subset $\Sigma$ of agents, and that subset $\Sigma$ contains
all good agents. This happens by the time $4\nu(2\Pi(\nu,\Lambda^*)+24\rho^2(\nu))/\epsilon^* + 4\nu^2P(\nu)/\epsilon^*+ \Pi(\nu,\Lambda^*)/\epsilon^*+
4\nu \Pi(\nu, \Lambda^*)/\delta+12\nu\rho^2(\nu)/\delta+\nu^2P(\nu)/\delta$.
\end{lemma}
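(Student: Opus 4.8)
# Proof Plan for Lemma \ref{total red}

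The plan is to prove two things for each good agent $S$: a structural claim (all good agents' bags converge to the same subset $\Sigma$ containing all good agents) and a timing claim (this happens within the stated time bound). The structural argument is the conceptual core, and the timing argument is a bookkeeping calculation analogous to Lemma \ref{red} from the motion-faults section.

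First I would establish the structural claim. The key observation is that the bag mechanism propagates data transitively: when a non-crashed agent meets another non-crashed agent, their bags merge. So I would argue that if the red periods are made long enough, then any two good agents $S$ and $S'$ must either meet directly or be connected through a chain of good agents whose red periods overlap in time, so that the information about every good agent floods through the entire population of good agents. The precise statement to prove is that during the red periods there is enough time for $S$ to execute procedure $Meeting$ against \emph{every} still-surviving good agent that is also in its red period. Here I would invoke Theorem \ref{meeting}: once two good agents are both executing $Meeting$ for a sufficiently long stretch (at least $\Pi(\nu,\Lambda^*)$ traversals each), they meet. Because there are at least two good agents, and because crashed agents are simply ignored, every good agent survives to exchange its bag with the others, so all good bags converge to a common $\Sigma$; and since every good agent is itself in the bag (it meets, or is met by, some other good agent during the red period), $\Sigma$ contains all good agents. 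The main subtlety, already flagged in the algorithm description, is that one agent may meet the smallest-labelled agent \emph{before} its crash and another \emph{after}; the transitivity of bag-merging is exactly what guarantees both still learn the smallest label indirectly, so the structural claim must be argued at the level of the union of all good bags rather than individual meetings.

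Next I would bound the time. I would decompose the elapsed time as in Lemma \ref{red}: the time for all good agents to reach state {\tt gatherer} (from Lemma \ref{all gatherers}, with $n$ replaced by the bound $\nu$ that $S$ knows), plus the duration of the red period $\tau=4\nu\Pi(\nu,\log L)/\delta+12\nu\rho^2(\nu)/\delta+\nu^2P(\nu)/\delta$ during which the flooding completes. The factor of $4\nu$ in $\tau$ (as opposed to the constant factor in the motion-faults case) is there precisely to absorb the up-to-$n$ interruptions of $Meeting$ caused by transitions back to state {\tt cruiser}, just as in Lemma \ref{first gatherer}; I would reuse that interruption-counting argument. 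Summing the gathering-transition time bound with $\tau$, and bounding $\log L\le\Lambda^*$ and all agent-specific parameters by their known upper bounds, should yield exactly the stated expression $4\nu(2\Pi(\nu,\Lambda^*)+24\rho^2(\nu))/\epsilon^* + 4\nu^2P(\nu)/\epsilon^*+ \Pi(\nu,\Lambda^*)/\epsilon^*+4\nu \Pi(\nu, \Lambda^*)/\delta+12\nu\rho^2(\nu)/\delta+\nu^2P(\nu)/\delta$.

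The hard part will be the structural claim, specifically verifying that the red period is long enough for \emph{complete} transitive flooding among good agents, not merely pairwise meetings. With up to $n$ good agents, information may need to traverse a chain of meetings, and I must ensure the red periods of consecutive agents in such a chain overlap sufficiently. The clean way to handle this is to show that each good agent's red period overlaps, by at least $\Pi(\nu,\Lambda^*)/\delta$ units of mutual $Meeting$-execution, with that of every \emph{other} good agent still executing $Meeting$; this forces a direct meeting between every pair of good agents that are simultaneously in $Meeting$, making the flooding graph a clique on the good agents and collapsing the chain-length issue entirely. Establishing this simultaneity window, while correctly accounting for the interruptions from {\tt cruiser} re-entries, is where the real care is needed; the rest reduces to the routine arithmetic of summing the time bounds.
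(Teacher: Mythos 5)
Your timing bookkeeping is essentially the paper's (start-of-red-period bound from Lemma \ref{all gatherers} plus the length $\tau$ of the red period), but your structural argument has a genuine gap, and it is exactly at the point you identify as "the hard part." Making the meeting graph of good agents a clique --- every pair of good agents shares a window of mutual \emph{Meeting}-execution and therefore meets at least once --- does \emph{not} imply that their bags end up equal. Bag equality depends on the temporal order of meetings relative to crashes. Concretely: let $S_1,S_2$ be good and let $B$ be a faulty agent. Suppose $S_1$ and $S_2$ meet early in their red periods, then $S_1$ meets $B$ while $B$ is still alive, and then $B$ crashes. Under total faults $S_2$ can never learn of $B$'s existence from $B$ itself, and if $S_1$ and $S_2$ do not meet again after $S_1$ acquired $B$'s data, their bags differ at the end of the red periods --- even though every requirement in your plan (pairwise overlap, clique flooding, transitivity) is satisfied. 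Since the common bag determines the gathering node (the starting node of the smallest agent in the bag), this is fatal: if $B$ has the smallest label, $S_1$ and $S_2$ walk to different nodes. Transitivity of bag-merging only helps if the merge happens \emph{after} the asymmetric information was acquired, which one meeting per pair cannot guarantee.

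The idea you are missing is the paper's pigeonhole argument for a \emph{crash-free window}, combined with a double sweep inside it. The red period is made long enough that, after the time by which $S$ knows all good agents have started their red periods, it still contains $\nu$ disjoint segments of length $2\Pi(\nu,\log L)/\delta$; since crashes are the only events that can create information asymmetry and there are at most $\nu$ agents, at least one segment contains no crash. Split that segment into halves $I_1$ and $I_2$, each of length $\Pi(\nu,\log L)/\delta$, and let $\Sigma$ be the set of agents alive at the start of $I_1$. During $I_1$ agent $S$ meets every agent of $\Sigma$ (all are executing \emph{Meeting} or sitting inside a single edge, and none crashes), so $S$'s bag becomes the union of all their bags; during $I_2$ agent $S$ meets them all \emph{again} and hands this union back, so every good agent's bag equals the data of $\Sigma$. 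After $I_2$ no bag can change, because every agent outside $\Sigma$ has already crashed and is unreadable. This collect-then-distribute structure inside a window with no crashes is precisely what eliminates the ordering problem; it is also where the $2\nu\Pi(\nu,\log L)/\delta$ portion of the factor $4\nu$ in $\tau$ comes from (the other $2\nu\Pi/\delta$, together with $12\nu\rho^2(\nu)/\delta$, absorbs the cruiser-interruption counting, which you did identify correctly). Finally, note that the lemma requires $S$ to \emph{know} the bags agree: in the paper's argument this knowledge is automatic, since $S$ can compute the pigeonhole time bound $t''$ from $\nu$, $\delta$ and $L$; your flooding argument gives $S$ no such certificate.
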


\begin{proof}
Consider a good agent $S$ with label $L$ which started its red period at time $t$. Suppose that $m$ is the upper bound on the size of the graph, adopted by agent $S$. 
Agent $S$ knows the lower bound $\delta$ on the speeds of all good agents, and knows some upper bound on the largest upper bound $\mu$ on the size of the graph adopted by any agent.
 Indeed,  any agent $T$ adopts as its upper bound the number of edge traversals in its execution of procedure $ESST^*$. This number is at most $12\rho^2(n)$. Hence it is at most $\nu=12\rho^2(m)$, which $S$ knows.  Thus $\mu \leq \nu$.
 
 Agent $S$ knows that by the time $t$ all agents are woken up. Hence it knows that by the time
 $t+\nu(\Pi(\nu, \log L)/\delta+ \Pi(\nu, \log L)/\delta+ 12\rho^2(\nu)/\delta)$, all good agents transited to state {\tt gatherer}. Indeed, consider a good agent $S'$ that has not yet transited to state {\tt gatherer}. Hence, after time $t$, it executes procedure $Meeting$ in state {\tt cruiser}, with possible interruptions by being involved in procedure $ESST^*$, either in state {\tt explorer} or in state {\tt token}. Each period when $S'$ executes $Meeting$ in state {\tt cruiser} (between interruptions) lasts at most time
$ \Pi(\nu, \log L)/\delta$ because by this time $S'$ would meet agent $S$. Each interruption when $S'$ executes procedure $ESST^*$ in state {\tt explorer} can last time at most $12\rho^2(\nu)/\delta$ because $S'$ has speed at least $\delta$. Each interruption when $S'$ executes procedure $ESST^*$ in state {\tt token} can last at most time
$ \Pi(\nu, \log L)/\delta$ because during such interruption agent $S'$ is inside a single edge and hence it would meet agent $S$. Since there are less than $\nu$
interruptions, as each of them is due to a crash of some agent, and there are at most $\nu$ agents, we conclude that $S$ knows that by the time
 $t+\nu(\Pi(\nu, \log L)/\delta+ \Pi(\nu, \log L)/\delta+ 12\rho^2(\nu)/\delta)$, all good agents transited to state {\tt gatherer}. 
 
 Hence $S$ knows that by the time  $t'=t+\nu(\Pi(\nu, \log L)/\delta+ \Pi(\nu, \log L)/\delta+ 12\rho^2(\nu)/\delta)+ \nu^2P(\nu)/\delta$ every good agent completed the procedure $R^*$ and started its red period. Consider the time $t''=t'+\nu(2\Pi(\nu, \log L)/\delta)$. The time segment of length $\nu(2\Pi(\nu, \log L)/\delta)$ that starts at time $t'$ can be divided into $\nu$ time segments of length $2\Pi(\nu, \log L)/\delta$. In at least one of them no agent crashes. This segment is divided into two segments
 $I_1$ and $I_2$, each of length $\Pi(\nu, \log L)/\delta$. Let $\Sigma$ be the set of agents that have not crashed by the beginning of segment $I_1$.
 During the time segment $I_1$ agent $S$ meets all agents from the set $\Sigma$, and hence its bag at the end of $I_1$ is the union of the bags of all the agents from $\Sigma$.
 During segment $I_2$ agent $S$ meets again all  agents from the set $\Sigma$, and hence the bag of each of them at the end of segment $I_2$ is equal to the bag
 of agent $S$. After the end of segment $I_2$ the bags of agents do not change. The set $\Sigma$ contains all good agents. We have $t''=t+\tau$, where $\tau$ is the
 length of the red period of agent $S$. Hence by the end of the red period of agent $S$, i.e., by the time $t''$, agent $S$ knows that the bags of all good agents contain the data of the same subset $\Sigma$ of agents, and that subset $\Sigma$ contains
all good agents. Since $t\leq 4\nu(2\Pi(\nu,\Lambda^*)+24\rho^2(\nu))/\epsilon^* + 4\nu^2P(\nu)/\epsilon^*+ \Pi(\nu,\Lambda^*)/\epsilon^*$ and $\tau\leq 4\nu \Pi(\nu, \Lambda^*)/\delta+12\nu\rho^2(\nu)/\delta+\nu^2P(\nu)/\delta$, the time estimate in the  lemma follows.
\end{proof}

Let $m^*$ be the largest bound $m$ calculated by any agent from the set $\Sigma$. Hence $\nu^*=12\rho^2(m^*)$ is an upper bound on the values of $\nu$ calculated by all agents from the set $\Sigma$.
Let $\delta^*$ be the minimum over all values $\delta$ and all speeds that appear in the common bag of agents from the set $\Sigma$. Hence $\delta^* \leq \epsilon^*$. 
Let $\psi=(12\nu^*+1)\Pi(\nu^*,\Lambda^*)/\delta^*+108\nu^*\rho^2(\nu^*)/\delta^*+5(\nu^*)^2P(\nu^*)/\delta^*$.
It follows from Lemma \ref{total red} that $\psi$ is an upper bound on the termination of red periods of all agents from the set $\Sigma$. 
(Note that if an agent from the set $\Sigma$ crashed before completing its red period, this red period could not extend beyond the time established in Lemma \ref{total red} for good agents.)
Let $\kappa^*$ be the largest speed that appears in the common bag of agents from the set $\Sigma$.

\begin{lemma}\label{destin total all}
 By the time $\psi+2\psi\kappa^*/\delta^*$ any good agent $S$ is at the starting node of the smallest agent $S^*$ from the set $\Sigma$, and starts its final waiting period. 
 \end{lemma}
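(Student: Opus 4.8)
The plan is to mirror the proof of Lemma~\ref{destin all} from the motion-faults section, replacing the parameters $\phi$, $\kappa$, $\epsilon$, $S_{min}$ by their total-faults counterparts $\psi$, $\kappa^*$, $\delta^*$, $S^*$. The two ingredients I would lean on are Lemma~\ref{total red} (together with the crash-free segments $I_1$, $I_2$ built in its proof) and the definitions of $\psi$, $\kappa^*$, $\delta^*$ given just before the statement.

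First I would argue that, at the end of its red period, agent $S$ has everything it needs to reach the target. By Lemma~\ref{total red}, by the end of its red period (which occurs by time $\psi$) the bag of $S$ contains the data of exactly the common set $\Sigma$, so $S$ can correctly identify the smallest agent $S^*\in\Sigma$; since the bag stores, for each of its members, the route to that member's starting node maintained relative to the current node, $S$ holds a concrete route from its present position to the starting node of $S^*$. Note also that $S$ is good, hence $S\in\Sigma$ by Lemma~\ref{total red}, so $\psi$ is indeed an upper bound on the termination time of $S$'s own red period.

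Next I would bound the length of this route. The key point is that, as shown in the proof of Lemma~\ref{total red}, during the crash-free segment $I_1$ agent $S$ meets \emph{directly} every agent of $\Sigma$, in particular $S^*$, while $S^*$ is still non-crashed; at that moment $S$ reads $S^*$'s memory and records $\overline{T'}$, where $T'$ is the trajectory of $S^*$ from its starting node up to this first meeting with $S$. Both this meeting and the end of $S$'s red period occur by time $\psi$, and since $S,S^*\in\Sigma$ their speeds appear in the common bag and are therefore at most $\kappa^*$; hence the trajectory $T$ of $S$ until the end of its red period, and the trajectory $T'$, each have length at most $\psi\kappa^*$. The route stored by $S$ at the end of its red period is the reverse of the portion of $T$ traversed after the meeting, followed by $\overline{T'}$, so it consists of at most $2\psi\kappa^*$ edge traversals.

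Finally I would convert traversals to time: since $S\in\Sigma$, its speed appears in the common bag and is therefore at least $\delta^*$, so following this route of at most $2\psi\kappa^*$ edges costs at most $2\psi\kappa^*/\delta^*$ time. Adding this to the time $\psi$ by which the red period ends shows that $S$ reaches the starting node of $S^*$ and begins its final waiting period by time $\psi+2\psi\kappa^*/\delta^*$, as claimed. I expect the only delicate point to be justifying that $S$ genuinely knows the route to $S^*$ despite $S^*$ possibly having crashed afterwards: this is exactly what the direct, non-crashed encounter during $I_1$ guaranteed by Lemma~\ref{total red} supplies, which is what lets the total-faults argument go through in the same way as the motion-faults one.
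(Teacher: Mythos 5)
Your argument hinges on the claim that, during the crash-free segment $I_1$ constructed in the proof of Lemma~\ref{total red}, agent $S$ meets \emph{directly} every agent of $\Sigma$ --- in particular $S^*$ --- while it is still non-crashed. This is the gap. The set of agents whose data appears in the common bag is not merely the set of agents alive at the beginning of $I_1$ (the set used inside the proof of Lemma~\ref{total red}): a bag also contains data of any agent that was met by \emph{somebody} before it crashed, possibly long before any red period started, and that data then propagates through bag merges. Since $S^*$ is by definition the agent with the smallest label \emph{in the bag}, it may well be such an early-crashed agent. For instance, the globally smallest agent may meet one other agent in state \texttt{cruiser}, crash immediately afterwards, and have its data spread to everyone indirectly, while $S$ itself only ever encounters it crashed and, under total faults, learns nothing from that encounter. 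This is precisely the difficulty the algorithm is built around (the paper stresses that ``one agent can meet the agent with the smallest label before it crashed, and another agent may meet it after it crashed''), and the paper's own proof of this lemma explicitly allows for it: the trajectory $T'$ there is a trajectory from the initial node of $S^*$ to some node $x$ of $T$ that ``may sometimes be learned by agent $S$ indirectly.''

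Because of this, your length bound also needs repair. If the knowledge of $S^*$'s starting node reaches $S$ through a chain of intermediaries, the route stored in $S$'s bag is a concatenation of reversed pieces of trajectories of several agents ($S^*$ up to its meeting with some agent $B_1$, then $B_1$ up to its meeting with $B_2$, and so on, until the last intermediary meets $S$), not the single directly-observed $\overline{T'}$ of your argument. The bound of $\psi\kappa^*$ on this concatenated trajectory still holds, but for a different reason: the pieces correspond to pairwise disjoint time intervals inside $[0,\psi]$, and every agent contributing a piece has speed at most $\kappa^*$ since its speed appears in the common bag. Once this is fixed, the rest of your proof (backtracking along $T$ to the node $x$ where the indirect route attaches, then following the reversed route, all at speed at least $\delta^*$, hence within time $2\psi\kappa^*/\delta^*$ after the red period ends by time $\psi$) coincides with the paper's argument.
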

 
 \begin{proof}
 Let $T$ be the trajectory of agent $S$ until the end of its red period, and let $T'$ be the trajectory from the initial node of agent $S^*$ to some node $x$ of the trajectory $T$, learned by agent $S$ from its bag.
 (Notice that agent $S^*$ might have crashed and this trajectory may sometimes be learned by agent $S$ indirectly.) The length of each of the trajectories $T$ and $T'$ is at most $\psi\kappa^*$. After the end of its red period, agent $S$ knows both these trajectories. Hence it can backtrack to node $x$ and then backtrack to the starting node of agent $S^*$. This takes at most $2\psi\kappa^*$ edge traversals, and hence it takes time at most $2\psi\kappa^*/\delta^*$. Hence by the time 
 $\psi+2\psi\kappa^*/\delta^*$
 agent $S$ is at the starting node of agent  $S^*$ and starts its final waiting period.
  \end{proof}

 \begin{lemma}\label{total all}
By the time $2\psi+4\psi\kappa^*/\delta^*$  all good agents are at the starting node of the smallest agent from the set $\Sigma$, 
all of them know it, and correctly declare that gathering is completed.
\end{lemma}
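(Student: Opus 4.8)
The plan is to mirror the structure of the proof of Lemma \ref{all} from the motion-faults section, with the crucial extra ingredient being that all good agents must agree on a single target node, which here is supplied by Lemma \ref{total red}. First I would establish that every good agent computes exactly the same final waiting period length $\tau'=\psi+2\psi\kappa^*/\delta^*$. By Lemma \ref{total red}, at the end of its red period the bag of each good agent contains the data of the same subset $\Sigma$, and $\Sigma$ contains all good agents. Since $m^*$, $\nu^*=12\rho^2(m^*)$, $\Lambda^*$, $\delta^*$ and $\kappa^*$ are all determined by the common data of $\Sigma$, every good agent derives the same $\psi$ and hence the same $\tau'$ (this is exactly the observation recorded in the description of state {\tt gatherer}). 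For the same reason, the agent of smallest label in $\Sigma$, call it $S^*$, is identical for all good agents, and each good agent holds in its bag a route to the starting node of $S^*$ (possibly learned indirectly, if $S^*$ has crashed); thus all good agents aim for one and the same node.

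Next I would invoke Lemma \ref{destin total all}: every good agent reaches the starting node of $S^*$ and begins its final waiting period by time $\psi+2\psi\kappa^*/\delta^*=\tau'$, where time is measured from the wakeup of the earliest good agent. Denote by $a_S$ the time at which good agent $S$ starts waiting; then $0\le a_S\le\tau'$, and since all clocks tick at the same rate, $S$ finishes its waiting period at time $a_S+\tau'\ge\tau'$.

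I would then close the timing argument. Because all good agents have arrived at the starting node of $S^*$ by time $\tau'\le a_S+\tau'$, when any good agent $S$ finishes its waiting period all good agents are already present at this node. Any agent of $\Sigma$ that is absent at that moment must have crashed, since every good agent of $\Sigma$ is present by time $\tau'$; hence the agents gathered at the node are precisely the good agents. Agent $S$ reads their memories, correctly concludes that all good agents are present, and declares gathering. The first good agent to finish waiting does so at time $\ge\tau'$, when all are present, and since co-located good agents mutually read memories, the declaration is simultaneous as guaranteed by the model; the latest it can occur is $\tau'+\tau'=2\psi+4\psi\kappa^*/\delta^*$, which is the bound in the statement.

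The main obstacle is exactly the agreement on a common target node under total faults, where a crashed agent reveals nothing and an agent met after its crash is indistinguishable from any other crash; this is what forces the long, synchronized red periods and the bag mechanism. That obstacle, however, has already been overcome in Lemma \ref{total red}, which delivers the common subset $\Sigma$. Once $\Sigma$ is in hand, the present lemma requires only that the derived quantities $\psi$, $\tau'$ and $S^*$ be computed identically by all good agents, together with the elementary timing bookkeeping $0\le a_S\le\tau'$, which is routine and essentially identical in spirit to the motion-faults case.
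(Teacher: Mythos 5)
Your proposal is correct and follows essentially the same route as the paper's own proof: the common bag from Lemma \ref{total red} yields an identical $\tau'$ and an identical target agent $S^*$ for all good agents, Lemma \ref{destin total all} gives arrival and start of waiting by time $\psi+2\psi\kappa^*/\delta^*$, and the equal-length waiting periods then force correct declaration by time $2\psi+4\psi\kappa^*/\delta^*$. Your explicit bookkeeping with $a_S$ is just a more detailed rendering of the paper's argument (one cosmetic slip: the agents present at the node are not \emph{precisely} the good agents, since crashed agents may also sit there, but this does not affect the correctness of the declaration).
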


\begin{proof}
We first show that at the beginning of its final waiting period, every good agent can compute the length $\tau'=\psi+2\psi\kappa^*/\delta^*$ of this period.
Indeed this value depends on four parameters: $\Lambda^*$, $\kappa^*$, $\nu^*$ and $\delta^*$.
At the beginning of its final waiting period, every good agent can compute these parameters from its bag (that is the same for all good agents). 
Hence it can compute $\psi$ and finally $\tau'$.

In view of Lemma \ref{destin total all}, by the time $\psi+2\psi\kappa^*/\delta^*$, every good agent is at the starting node of the smallest agent $S^*$ from the set $\Sigma$, and starts its final waiting period.
Hence at the end of its final waiting period, which has the same length $\tau'=\psi+2\psi\kappa^*/\delta^*$, every good agent knows that all good agents started their final waiting period.
Hence, by the time $2\psi+4\psi\kappa^*/\delta^*$, every good agent correctly declares that gathering is completed.
\end{proof}

We are now ready to prove the main result of this section.

\begin{theorem}
Algorithm {\tt Gathering with total faults} is a correct algorithm for the task of gathering with total faults. It works in time polynomial in the size $n$ of the graph, in the length $\Lambda$ of the largest label,
 in the inverse of the smallest speed $\epsilon$, and in the ratio $r$ between the largest and the smallest speed.
\end{theorem}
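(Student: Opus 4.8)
The plan is to establish correctness first and then the time bound, mirroring the structure already used for the motion-fault theorem. Correctness follows almost entirely from Lemma \ref{total all}: by that lemma, all good agents are simultaneously present at the starting node of the smallest agent in the common subset $\Sigma$, they all recognize this fact, and they correctly declare that gathering is completed. I would simply invoke Lemma \ref{total all} and observe that, since $\Sigma$ contains all good agents by Lemma \ref{total red}, the smallest agent in each agent's (common) bag is well defined and identical across all good agents, so the gathering point is unambiguous. Termination is simultaneous by the standard argument (good agents at the same node read each other's memory), already noted in the model section.

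For the running time, the total time is $2\psi + 4\psi\kappa^*/\delta^*$ by Lemma \ref{total all}, so it suffices to show that $\psi$, $\kappa^*$, and $1/\delta^*$ are each polynomial in $n$, $\Lambda$, $1/\epsilon$, and $r = \kappa/\epsilon$. The key structural estimates are already in place. First, by Theorem \ref{esst} the number of edge traversals in any execution of $ESST^*$ (hence the adopted upper bound $m$ and its maximum $m^*$) is at most $12\rho^2(n)$, which is polynomial in $n$ since $\rho$ is a polynomial; therefore $\nu^* = 12\rho^2(m^*)$ is polynomial in $n$. Second, since $c \le \nu^2 P(\nu)$ in the execution of $R^*$ and $\gamma = \theta/c$, the quantity $1/\gamma$ is polynomial in $n$ and $1/\epsilon$; because $\delta = \min(\gamma,\alpha) \ge \min(\gamma,\epsilon)$ and $\delta^*$ is the minimum of finitely many such $\delta$ values and speeds appearing in the common bag, $1/\delta^*$ is polynomial in $n$ and $1/\epsilon$ as well. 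Third, $\kappa^*$ is a speed appearing in the common bag and is thus at most $\kappa$, so $\kappa^*/\delta^* \le \kappa/\delta^*$ is polynomial in $n$, $1/\epsilon$, and $r$.

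Given these, I would substitute into the definition $\psi = (12\nu^*+1)\Pi(\nu^*,\Lambda^*)/\delta^* + 108\nu^*\rho^2(\nu^*)/\delta^* + 5(\nu^*)^2 P(\nu^*)/\delta^*$ and check each summand termwise, exactly as in the motion-fault proof. Since $\rho$, $P$, and $\Pi$ are all polynomials (the latter non-decreasing in each variable), and since $\nu^*$ is polynomial in $n$, $1/\delta^*$ is polynomial in $n$ and $1/\epsilon$, and $\Lambda^* \le \Lambda$, each summand is polynomial in $n$, $\Lambda$, and $1/\epsilon$; the term involving $\Pi(\nu^*,\Lambda^*)$ is where the dependence on $\Lambda$ enters. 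Hence $\psi$ is polynomial in $n$, $\Lambda$, and $1/\epsilon$, and multiplying by the factor $\kappa^*/\delta^*$ introduces at most a polynomial factor in $r$, giving the claimed bound on $2\psi + 4\psi\kappa^*/\delta^*$.

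The main obstacle is not any single calculation but the bookkeeping needed to confirm that every parameter feeding into $\psi$ and into the final factor is bounded by a quantity that the agents actually observe and that is provably polynomial in the four target parameters. In particular one must be careful that $\Lambda^*$, $\kappa^*$, $\epsilon^*$, and $\delta^*$ refer only to agents in the common bag $\Sigma$ (which contains all good agents), and that bounding $\Lambda^* \le \Lambda$, $\kappa^* \le \kappa$, and $\delta^* \ge \min_S \gamma_S \ge$ (something polynomial in $\epsilon$ and $n$) does not secretly depend on the speeds or labels of crashed agents that no good agent ever learns about. Once this is pinned down, the termwise polynomiality check is routine and entirely parallel to the argument for Algorithm {\tt Gathering with motion faults}.
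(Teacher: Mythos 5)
Your proposal is correct and takes essentially the same route as the paper's own proof: correctness is invoked directly from Lemma \ref{total all}, and the time bound $2\psi+4\psi\kappa^*/\delta^*$ is verified termwise using $\kappa^*\leq\kappa$, $\Lambda^*\leq\Lambda$, the polynomiality of $\nu^*$ in $n$ (via $m^*\leq 12\rho^2(n)$), and the polynomiality of $1/\delta^*$ in $n$ and $1/\epsilon$. The bookkeeping concerns you raise at the end are resolved exactly as you suggest, and the paper's argument is the same termwise check.
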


\begin{proof}
The correctness of the algorithm follows immediately from Lemma \ref{total all}. 
It remains to verify that the time $2\psi+4\psi\kappa^*/\delta^*$  is polynomial in $n$, $\Lambda$, $1/\epsilon$ and $r$. 

We have $\kappa^* \leq \kappa$. The value $1/\delta$ computed by each agent is  polynomial in
$n$ and in $1/\epsilon$, and thus so is $1/\delta ^*$. Hence $\kappa^*/\delta^*$ is polynomial in $r=\kappa/\epsilon$, in $n$ and in $1/\epsilon$.
Thus it is enough to verify that $\psi=(12\nu^*+1)\Pi(\nu^*,\Lambda^*)/\delta^*+108\nu^*\rho^2(\nu^*)/\delta^*+5(\nu^*)^2P(\nu^*)/\delta^*$ is  polynomial in  $n$, $\Lambda$ and $1/\epsilon$.

The upper bound $m$ on the size of the graph, calculated by each agent, is at most $12\rho^2(n)$.
Since $\rho$ is a polynomial, it follows that $m$ is polynomial in $n$, and so is $m^*$. Hence $\nu^*=12\rho^2(m^*)$ is also polynomial in $n$.

We have $\Lambda^* \leq \Lambda$, and hence each of the three summands in $\psi$ are polynomial in $n$, $\Lambda$ and $1/\epsilon$ because $\Pi$, $\rho$ and $P$ are polynomials. Hence
$\psi$ is polynomial in $n$, $\Lambda$ and $1/\epsilon$.
\end{proof}

\section{Conclusion}

Our algorithm for motion faults works for any team of at least two agents, regardless of the number of faulty agents, and our algorithm for total faults works for any team of agents, containing at least two good agents; the latter is necessary. Hence, in terms of feasibility of gathering with crash faults, our results are optimal.

We showed that the time of our algorithms is polynomial in  the size $n$ of the graph, in the length $\Lambda$ of the largest label,
 in the inverse of the smallest speed $\epsilon$, and in the ratio $r$ between the largest and the smallest speed. Optimizing this polynomial seems to be out of the reach at present, as even the optimal rendezvous time of two good agents is not known for arbitrary graphs. However, it remains open if the time of gathering with crash faults really has to depend on all four above parameters. The dependence on $n$ and on $1/\epsilon$ seems natural. The time of gathering should somehow depend on the labels as well, but maybe it could depend on the length of the smallest label (as it is the case for rendezvous without faults) and not the largest. The most problematic seems
 the dependence on the ratio between the largest and the smallest speed. Is this dependence necessary?

\bibliographystyle{plain}


\end{document}